\documentclass{IEEEtran}

\usepackage{amsmath}
\usepackage{amsfonts}
\usepackage{amssymb}
\usepackage{bm}

\usepackage{amsthm}
\newtheorem{lem}{Lemma}[section]
\newtheorem{theorem}{Theorem}[section]
\newtheorem{cor}{Corollary}[section]
\newtheorem{defn}{Definition}[section]
\newcounter{MYtempeqncnt}
\newcounter{magicEquations}

\setcounter{magicEquations}{82}

\usepackage{graphicx}

\title{Cram\'{e}r-Rao Bound for Blind Channel Estimators in Redundant Block Transmission Systems}
\author{Yen-Huan Li, {\it Student Member, IEEE}, Borching Su, {\it Member, IEEE}, and Ping-Cheng
Yeh, {\it Member, IEEE}}

\begin{document}
\maketitle

\begin{abstract}
In this paper, we derive the Cram\'{e}r-Rao bound (CRB) for blind
channel estimation in redundant block transmission systems, a lower
bound for the mean squared error of any blind channel estimators.
The derived CRB is valid for any full-rank linear redundant
precoder, including both zero-padded (ZP) and cyclic-prefixed (CP)
precoders. A simple form of CRBs for multiple complex parameters is
also derived and presented which facilitates the CRB derivation of
the problem of interest. A comparison is made between the derived
CRBs and performances of existing subspace-based blind channel
estimators for both CP and ZP systems. Numerical results show that
there is still some room for performance improvement of blind
channel estimators.
\end{abstract}

\begin{IEEEkeywords}
Cram\'{e}r-Rao bound (CRB), blind channel estimation, block
transmission systems, complex parameters, constrained parameters.
\end{IEEEkeywords}

\section{Introduction}
Block transmission systems, especially orthogonal frequency-division
multiplexing (OFDM) systems, have become one of the most popular
solutions to meet the high data rate requirements of modern
communication standards. For example, wireless local-area networks
(WLAN) standards such as IEEE 802.11n \cite{802.11n}, and beyond
third-generation (B3G) cellular communication standards such as IEEE
802.16e \cite{IEEE802.16e} or 3GPP-LTE \cite{3GPPLTE}, all apply
block transmission schemes as the basic physical-layer transmission
scheme. Channel estimation is crucial for equalization in these
systems. Most of the modern standards using block transmission
systems insert some already-known symbols, called pilot symbols, in
the transmitted signals, and the receiver estimates the channel
response from the pilot symbols \cite{Tong2004}. Such an approach is
called pilot-assisted channel estimation method. However, pilot
assisted methods suffer from loss of bandwidth efficiency since the
inserted pilot symbols do not carry any information.

Blind channel estimation in block transmission systems, on the other
hand, aims to avoid the redundancy introduced by pilot symbols. The
goal of blind channel estimation is to estimate the channel response
directly from \emph{unknown symbols}, which can be data symbols
\footnote{Theoretically, we only need one pilot symbol to eliminate
the scalar ambiguity \cite{Su2007}.}. Current blind channel
estimation algorithms can be roughly divided into two main
categories. The first exploits the fact that all of the transmitted
symbols come from a set of finite number of points in the signal
space, i.e., the modulation constellation \cite{Zhou2001}. The
second assumes no \textit{a priori} information about the modulation
constellation at the receiver side and consists mostly of
subspace-based methods \cite{Tong1998}. This kind of approach can be
applied in a wider range of situations, but may have a slightly
worse performance than finite-alphabet methods. This paper addresses
only the performance bound of blind channel estimators assuming no
{\it a priori} information about the transmitted signals.

Cram\'{e}r-Rao bound (CRB) is an important performance bound which
gives a lower bound for the mean squared error (MSE) of channel
estimators in a communication system \cite{Tong2004}. A CRB for
blind finite impulse response (FIR) multichannel estimation has been
derived in \cite{Carvalho1997}. It is, however, not applicable for
redundant block transmission systems. There are two main types of
redundant block transmission systems, namely, systems with null
guard intervals, also known as zero padded (ZP) systems, and systems
with cyclic prefixes (CP). The CRB for blind channel estimators in
ZP block transmission systems has been derived in
\cite{Barbarossa2002, Su2008a}. But the CRB for blind channel
estimators in CP block transmission systems, to our best knowledge,
has not yet been studied in the literature. In this paper, a general
CRB for block transmission systems is derived, which is valid for
any linear redundant precoders, including those in ZP
\cite{Scaglione1999a} and CP systems. We then compare the
performances of existing blind channel estimators \cite{Su2007,
Su2007a, Muquet2002, Scaglione1999b} with the derived CRB.

An additional contribution of this paper is a simplification of CRB
formulas for complex parameters. To calculate the CRB for blind
channel estimation in wireless communication systems, in which
numerical values are usually modeled by complex numbers, we extend
the existing results of CRBs for unconstrained and constrained
parameters, originally for \emph{real} parameters
\cite{VanTrees2001,Gorman1990,Stoica1998}, to the case of multiple
\emph{complex} parameters. We devote one section to do this work
before starting the derivation of the CRB for the problem of
interest.

There has been some existing literature on this topic
\cite{Bos1994,Jagannatham2004,Smith2005,Ollila2008}. In
\cite{Bos1994,Jagannatham2004,Ollila2008}, the derived CRBs for
unconstrained and constrained complex parameters, unlike the
corresponding CRBs for real parameters, are variance bounds of any
unbiased estimators for $[\begin{array}{cc}\bm{\theta}^T &
\bm{\theta}^H \end{array}]^T \in \mathbb{C}^{2n}$, a vector of
double size of the unknown parameter vector
$\bm{\theta}\in\mathbb{C}^n$. In \cite{Smith2005}, a CRB that
represents variance bound of unbiased estimators for
$\bm{\theta}\in\mathbb{C}^n$ is presented for the first time.
However, its result is in a complicated form compared with the well
known CRB for real parameters \cite{Stoica1998}. In addition, the
result in \cite{Smith2005} does not consider constraints on unknown
parameters and cannot be directly applied in the problem considered
in this paper.

Instead of a complicated form, we seek here to derive {\it simple}
CRBs for unconstrained and constrained complex parameters in a form
similar to those for real parameters. Specifically, the form of the
derived CRB for unconstrained complex parameters is exactly
identical to the CRB for unconstrained real parameters; and if the
constraint function is holomorphic, the CRB for constrained
complex parameters can also have the same form as that of the CRB
for constrained real parameters \cite{Stoica1998}. Using this
result, the derivation of CRB for the blind channel estimation
problem becomes simple and compact compared to that in
\cite{Barbarossa2002}.

The rest of the paper is organized as follows. In Section
\ref{sec_complex_crb}, we derive \emph{simple} CRBs for
unconstrained and constrained complex parameters. In Section
\ref{sec_sys_model}, the system model of a redundant block
transmission system is described and the blind channel estimation
problem is formulated. We then derive the CRB for blind channel
estimators in Section \ref{sec_crb_blind}, using results given in
Section \ref{sec_complex_crb}. In Section \ref{sec_num_results},
numerical results are conducted to compare the derived CRB with
performances of existing blind channel estimators. Conclusions and
future work are presented in Section \ref{sec_conclusions}.

\subsection*{Notations}
Bold-faced lower case letters represent column vectors, and
bold-faced upper case letters are matrices. Superscripts such as in
$\bm{v}^*$, $\bm{v}^T$, $\bm{v}^H$, $\bm{M}^{-1}$, and
$\bm{M}^{\dagger}$ denote the conjugate, transpose, conjugate
transpose (Hermitian), inverse, and Moore-Penrose generalized
inverse of the corresponding vector or matrix. The vector
$\mathsf{vec}( \bm{M} )$ denotes the column vector formed by
stacking all columns of $\bm{M}$. The vector $\mathsf{E}\left[
\bm{v} \right]$ denotes the expectation of $\bm{v}$, and the matrix
$\mathsf{E}\left[ \bm{M} \right]$ denotes the expectation of
$\bm{M}$. The matrix $\mathsf{cov}(\bm{u}, \bm{v})$ denotes the
cross-covariance matrix of random vectors $\bm{u}$ and $\bm{v}$ and
is defined as $\mathsf{cov}(\bm{u}, \bm{v}) \triangleq \mathsf{E}
\left[ ( \bm{u} - \mathsf{E}( \bm{u} ) ) ( \bm{v} - \mathsf{E}(
\bm{v} ) )^H \right]$. The Kronecker product of $\bm{A}$ and
$\bm{B}$ is denoted by $\bm{A} \otimes \bm{B}$. The notation $\bm{A}
\geq \bm{B}$ means that $\bm{A} - \bm{B}$ is a nonnegative-definite
matrix. The trace of a square matrix $\bm{A}$ is denoted by
$\mathsf{tr}(\bm{A})$. Matrices $\bm{I}_n$ and $\bm{0}_{m\times n}$
denote the $n\times n$ identity matrix and the $m\times n$ zero
matrix, respectively. Notations $[\bm{A}]_{i, j}$ and $[\bm{v}]_i$
refer to the $(i,j)$th entry of matrix $\bm{A}$ and the $i$th
element of vector $\bm{v}$, respectively. The notation
$[\bm{v}]_{a:b}$ denotes the column vector whose elements contain
the $a$th through $b$th elements of vector $\bm{v}$.

\section{Simple Forms of CRBs for Complex Parameters} \label{sec_complex_crb}
In this section, we present an extension of CRBs originally for
unconstrained and constrained \emph{real} parameters to the case of
\emph{complex} parameters. The CRB for constrained complex
parameters derived in this section, in particular, will be directly
used in Section \ref{sec_crb_blind} for CRB derivation for blind
channel estimators in redundant block transmission systems.

\subsection{CRB for Unconstrained Complex Parameters}
Suppose that $\bm{\theta} \in \mathbb{C}^n$ is an unknown complex
parameter vector, and $\bm{t}(\bm{y})$ an unbiased estimator of
$\bm{\theta}$ based on a complex observation vector $\bm{y} \in
\mathbb{C}^p$ characterized by a probability density function (pdf)
$p( \bm{y}; \bm{\theta} )$. We assume the regularity condition is
met:
\begin{equation}
\mathsf{E}\left[ \frac{\partial \ln p(\bm{y};\bm{\theta})}{\partial
\boldsymbol{\theta}^T} \right] = \mathsf{E}\left[ \frac{\partial \ln
p(\bm{y};\bm{\theta})}{\partial \boldsymbol{\theta}^H} \right] =
\boldsymbol{0}.
\end{equation}
Here the differentiation is defined according to Wirtinger's
calculus \cite{Fritzsche2002} that
\begin{equation}
\frac{\partial f}{\partial z} \triangleq \frac{1}{2} \left(
\frac{\partial f}{\partial \alpha} - j\frac{\partial f}{\partial
\beta} \right)
\end{equation}
for all $f: \mathbb{C}\to\mathbb{C}$ and $z = \alpha + j\beta \in
\mathbb{C}$, $\alpha, \beta \in \mathbb{R}$. We assume
$p(\bm{y};\bm{\theta)}$ to be real differentiable so that the
derivative exists. We first define the complex Fisher information
matrix (FIM) and then give the CRB expression in terms of the FIM as
a theorem.

\begin{defn}[Complex Fisher information matrix] Suppose $\boldsymbol{\theta} \in \mathbb{C}^n$
is an unknown complex parameter vector and $\boldsymbol{y}$ is a
complex observation vector characterized by a pdf
$p(\boldsymbol{y};\boldsymbol{\theta})$. Then the \emph{complex
Fisher information matrix} (FIM) for $\boldsymbol{y}$ is defined as
\begin{equation}
\bm{J} \triangleq \mathsf{E} \left[ \left( \frac{\partial \ln
p}{\partial \boldsymbol{\theta}^*}\right) \left( \frac{\partial \ln
p}{\partial \boldsymbol{\theta}^*}\right)^H \right].
\label{eq_def_fim}
\end{equation}
\end{defn}
Note that the FIM defined above has a size of $n\times n$ rather
than $2n\times 2n$ as in many previous results
\cite{Bos1994,Jagannatham2004,Ollila2008}. We then present the CRB
expression of unconstrained complex parameters in terms of this FIM
in the following theorem.

\begin{theorem} \label{thm_complex_crlb}
Suppose $\boldsymbol{t}(\boldsymbol{y}) \in \mathbb{C}^n$ is an
unbiased estimator of an unknown complex parameter vector
$\boldsymbol{\theta} \in \mathbb{C}^n$ based on a complex
observation vector $\boldsymbol{y}$ (i.e.,
$\mathsf{E}\left[\boldsymbol{t}(\boldsymbol{y})\right] =
\boldsymbol{\theta}$ ) characterized by a pdf
$p(\boldsymbol{y};\boldsymbol{\theta})$. Then
\begin{equation}
\mathsf{cov}\left( \boldsymbol{t}, \bm{t} \right) \geq
\bm{J}^{\dagger}. \label{new_crlb}
\end{equation}
 The equality holds if and only if
\begin{equation}
\boldsymbol{t} - \boldsymbol{\theta} = \boldsymbol{J}^{\dagger}
\frac{\partial \ln p}{\partial \boldsymbol{\theta}^*} \mbox{ with
probability }1. \notag
\end{equation}

\end{theorem}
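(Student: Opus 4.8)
The plan is to mimic the classical Cramér–Rao argument, but carried out entirely in terms of the Wirtinger derivative $\partial \ln p / \partial \boldsymbol{\theta}^*$, so that no doubling of the parameter vector is needed. First I would introduce the (complex) score vector $\boldsymbol{s} \triangleq \partial \ln p / \partial \boldsymbol{\theta}^*$, and note that the regularity condition gives $\mathsf{E}[\boldsymbol{s}] = \boldsymbol{0}$, so that $\mathsf{cov}(\boldsymbol{s},\boldsymbol{s}) = \mathsf{E}[\boldsymbol{s}\boldsymbol{s}^H] = \boldsymbol{J}$. The key preliminary computation is the cross-covariance $\mathsf{cov}(\boldsymbol{t},\boldsymbol{s}) = \mathsf{E}[(\boldsymbol{t}-\boldsymbol{\theta})\boldsymbol{s}^H]$. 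Differentiating the unbiasedness identity $\int \boldsymbol{t}(\boldsymbol{y}) p(\boldsymbol{y};\boldsymbol{\theta})\, d\boldsymbol{y} = \boldsymbol{\theta}$ with respect to $\boldsymbol{\theta}^*$ (and using that $\boldsymbol{\theta}$ and $\boldsymbol{\theta}^*$ are treated as independent under Wirtinger calculus, so the right-hand side has zero derivative with respect to $\boldsymbol{\theta}^*$), together with $\partial p/\partial\boldsymbol{\theta}^* = p\,\partial \ln p/\partial\boldsymbol{\theta}^*$, yields $\mathsf{E}[(\boldsymbol{t}-\boldsymbol{\theta})\boldsymbol{s}^H] = \boldsymbol{I}_n$; likewise differentiating with respect to $\boldsymbol{\theta}$ gives $\mathsf{E}[(\boldsymbol{t}-\boldsymbol{\theta})(\partial \ln p/\partial \boldsymbol{\theta})^H] = \boldsymbol{0}$, though only the first relation is needed here.

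Next I would form the stacked vector $\boldsymbol{z} \triangleq \left[\begin{array}{c} \boldsymbol{t}-\boldsymbol{\theta} \\ \boldsymbol{s}\end{array}\right]$ and write down its covariance matrix
\begin{equation}
\mathsf{cov}(\boldsymbol{z},\boldsymbol{z}) = \left[\begin{array}{cc} \mathsf{cov}(\boldsymbol{t},\boldsymbol{t}) & \boldsymbol{I}_n \\ \boldsymbol{I}_n & \boldsymbol{J}\end{array}\right] \geq \boldsymbol{0}, \notag
\end{equation}
which is nonnegative-definite because it is a Gram-type matrix. The desired bound $\mathsf{cov}(\boldsymbol{t},\boldsymbol{t}) \geq \boldsymbol{J}^{\dagger}$ should then follow from a Schur-complement / generalized-inverse argument. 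The subtlety is that $\boldsymbol{J}$ need not be invertible, so I cannot simply invoke the usual Schur complement; instead I would use the fact that a block matrix $\left[\begin{array}{cc} A & B \\ B^H & C\end{array}\right]$ with $C \geq 0$ is nonnegative-definite if and only if $\mathcal{R}(B^H) \subseteq \mathcal{R}(C)$ and $A \geq B C^{\dagger} B^H$. Applying this with $B = \boldsymbol{I}_n$, $C = \boldsymbol{J}$ gives both that $\boldsymbol{J}$ must have full range (so in fact $\boldsymbol{J}^{\dagger} = \boldsymbol{J}^{-1}$ whenever the estimator exists and is regular — I would remark on this) and that $\mathsf{cov}(\boldsymbol{t},\boldsymbol{t}) \geq \boldsymbol{J}^{\dagger}$.

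For the equality condition, I would trace through when the block matrix $\mathsf{cov}(\boldsymbol{z},\boldsymbol{z})$ becomes singular in the relevant direction: equality in $\mathsf{cov}(\boldsymbol{t},\boldsymbol{t}) = \boldsymbol{J}^{\dagger}$ holds iff the Schur complement vanishes, i.e. iff $(\boldsymbol{t}-\boldsymbol{\theta}) - \boldsymbol{J}^{\dagger}\boldsymbol{s}$ has zero covariance, hence (being zero-mean) equals $\boldsymbol{0}$ with probability $1$. This gives exactly $\boldsymbol{t}-\boldsymbol{\theta} = \boldsymbol{J}^{\dagger}\,\partial \ln p/\partial\boldsymbol{\theta}^*$ a.s.

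The main obstacle I anticipate is twofold: (i) justifying the interchange of differentiation and integration in the cross-covariance computation and handling the Wirtinger bookkeeping carefully enough that the identity $\mathsf{E}[(\boldsymbol{t}-\boldsymbol{\theta})\boldsymbol{s}^H] = \boldsymbol{I}_n$ comes out with the right constant (factors of $1/2$ from the Wirtinger definition must cancel correctly, which is really where the "same form as the real case" claim is earned); and (ii) the rank-deficient generalized-inverse version of the Schur complement, since the clean statement $\mathsf{cov}(\boldsymbol{t},\boldsymbol{t}) \geq \boldsymbol{J}^{\dagger}$ with a pseudoinverse requires the range condition $\mathcal{R}(\boldsymbol{I}_n)\subseteq\mathcal{R}(\boldsymbol{J})$, and I would want to state this as a small linear-algebra lemma (or cite one) rather than prove it inline.
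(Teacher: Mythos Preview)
Your approach is essentially the paper's: establish $\mathsf{cov}(\bm{t},\bm{s})=\bm{I}_n$ for the complex score $\bm{s}=\partial\ln p/\partial\bm{\theta}^*$, then apply a covariance inequality. The paper packages the second step as a standalone ``Cauchy--Schwarz for random vectors'' lemma, $\mathsf{cov}(\bm{y},\bm{y})\geq\mathsf{cov}(\bm{y},\bm{x})\,\mathsf{cov}(\bm{x},\bm{x})^\dagger\,\mathsf{cov}(\bm{x},\bm{y})$, proved by directly computing the covariance of $\bm{z}=\bm{y}-\bm{\Sigma}_{21}\bm{\Sigma}_{11}^\dagger\bm{x}$; this bypasses your range-condition concern entirely, since no block-PSD characterization is invoked and the pseudoinverse identities $(\bm{\Sigma}_{11}^\dagger)^H=\bm{\Sigma}_{11}^\dagger$, $\bm{\Sigma}_{11}^\dagger\bm{\Sigma}_{11}\bm{\Sigma}_{11}^\dagger=\bm{\Sigma}_{11}^\dagger$ suffice. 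The equality case is then read off from $\mathsf{cov}(\bm{z},\bm{z})=\bm{0}$, exactly as you propose.

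One genuine bookkeeping slip: to obtain $\mathsf{E}\bigl[(\bm{t}-\bm{\theta})\bm{s}^H\bigr]=\bm{I}_n$ you must differentiate the unbiasedness identity with respect to $\bm{\theta}$, so that the right-hand side yields $\partial\bm{\theta}/\partial\bm{\theta}^T=\bm{I}_n$, not with respect to $\bm{\theta}^*$, which gives $\bm{0}$. The result then matches $\bm{s}^H$ because $\ln p$ is real-valued, whence $(\partial\ln p/\partial\bm{\theta}^*)^H=\partial\ln p/\partial\bm{\theta}^T$. The paper does exactly this; your two differentiations are interchanged, which is precisely the Wirtinger pitfall you flagged in obstacle~(i).
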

\begin{proof}
See Appendix A.
\end{proof}

\subsection{CRB for Constrained Complex Parameters}
In this subsection we derive the CRB for constrained complex
parameters with \emph{holomorphic constraint functions}. Although
the result is not so general as those in \cite{Jagannatham2004,
Ollila2008}, it is more compact and easy to manipulate. In fact, the
form of the derived constrained CRB for complex parameters is the
same as that for real parameters \cite{Stoica1998}.

To begin the derivation of complex constrained CRB, we need some
basic concepts from complex analysis, including the definition of a
holomorphic map.

\begin{defn}[Holomorphic map \cite{Fritzsche2002}]
Let $B$ be an open subset of $\mathbb{C}^n$. A map $\bm{f}: B\to
\mathbb{C}^m$ is said to be holomorphic if
\begin{equation}
\frac{\partial \bm{f}}{\partial \bm{z}^H} = \bm{0},
\end{equation}
where $\bm{z}\in\mathbb{C}^n$ is a complex vector.
\end{defn}

As long as the constraints can be expressed as a holomorphic map of
complex parameters, the CRB for constrained complex parameters
exists, as presented in the following theorem.
\begin{figure*}[!t]
\begin{center}
\includegraphics[width=5.4in]{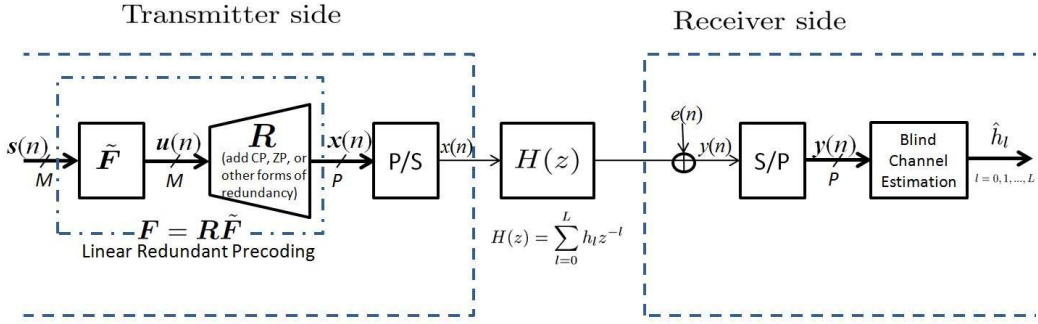}\label{fig:block_diagram}
\centering\caption{Block diagram of blind channel estimation in a
redundant block transmission system} \end{center}
\end{figure*}
\begin{theorem} \label{them_complex_constrained_crb}
Let $\bm{t}( \bm{y} )$ be an unbiased estimator of an unknown
parameter $\bm{\theta} \in \mathbb{C}^n$ based on observation
$\bm{y} \in \mathbb{C}^p$ characterized by its pdf $p( \bm{y};
\bm{\theta} )$. Furthermore, we require the parameter $\bm{\theta}$
to satisfy a holomorphic constraint function $\bm{f}: \mathbb{C}^n
\to \mathbb{C}^m$, $m\leq n$,
\begin{equation}
\bm{f}( \bm{\theta} ) = \bm{0}.
\end{equation}
Assume that $\partial \bm{f} / \partial \bm{\theta}^T$ has full
rank. Let $\bm{U}$ be a matrix with $(n-m)$ orthonormal columns that
satisfies
\begin{equation}
\frac{\partial \bm{f}}{ \partial \bm{\theta}^T } \bm{U} = \bm{0}.
\label{eq:orthonormalU}
\end{equation}
Then
\begin{equation}
\mathsf{cov}\left( \bm{t}, \bm{t} \right) \geq \bm{U} \left(
\bm{U}^H \bm{J} \bm{U} \right)^\dagger \bm{U}^H,
\end{equation}
where $\bm{J}$ is the FIM defined as in (\ref{eq_def_fim}). The
equality holds if and only if
\begin{equation}
\bm{t} - \bm{\theta} = \bm{U} \left( \bm{U}^H \bm{J} \bm{U}
\right)^\dagger \bm{U}^H \frac{\partial \ln p}{\partial
\bm{\theta}^*}\quad \mbox{with probability }1.
\end{equation}
\end{theorem}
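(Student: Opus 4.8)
The plan is to reduce the constrained problem to an unconstrained one by reparametrizing along the tangent space of the constraint surface, then invoke Theorem \ref{thm_complex_crlb}. First I would observe that, since $\bm{f}$ is holomorphic and $\partial\bm{f}/\partial\bm{\theta}^T$ has full rank $m$, the solution set $\{\bm{\theta}: \bm{f}(\bm{\theta})=\bm{0}\}$ is locally an $(n-m)$-dimensional complex manifold whose tangent space at $\bm{\theta}$ is exactly the null space of $\partial\bm{f}/\partial\bm{\theta}^T$, i.e.\ the column space of $\bm{U}$. Writing $\bm{\theta} = \bm{\theta}_0 + \bm{U}\bm{\alpha}$ for a new unconstrained parameter $\bm{\alpha}\in\mathbb{C}^{n-m}$ (at least locally, which suffices for a derivative-based bound), the pdf becomes $\tilde p(\bm{y};\bm{\alpha}) = p(\bm{y};\bm{\theta}_0+\bm{U}\bm{\alpha})$. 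The key computational step is the chain rule in Wirtinger calculus: because $\bm{U}$ has constant (conjugate-linear-free) entries, $\partial\ln\tilde p/\partial\bm{\alpha}^* = \bm{U}^H\,\partial\ln p/\partial\bm{\theta}^*$, so the FIM for $\bm{\alpha}$ is $\tilde{\bm{J}} = \mathsf{E}[\bm{U}^H(\partial\ln p/\partial\bm{\theta}^*)(\partial\ln p/\partial\bm{\theta}^*)^H\bm{U}] = \bm{U}^H\bm{J}\bm{U}$.

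Next I would handle the estimator side. Given the unbiased estimator $\bm{t}(\bm{y})$ of $\bm{\theta}$, the natural induced estimator of $\bm{\alpha}$ is $\bm{s}(\bm{y}) = \bm{U}^H(\bm{t}(\bm{y})-\bm{\theta}_0)$, which is unbiased for $\bm{\alpha}$ since $\mathsf{E}[\bm{t}]=\bm{\theta}=\bm{\theta}_0+\bm{U}\bm{\alpha}$ and $\bm{U}^H\bm{U}=\bm{I}_{n-m}$. Applying Theorem \ref{thm_complex_crlb} to $\bm{s}$ gives $\mathsf{cov}(\bm{s},\bm{s})\geq\tilde{\bm{J}}^\dagger = (\bm{U}^H\bm{J}\bm{U})^\dagger$. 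Now I would push this back to $\bm{t}$: since $\bm{t}-\bm{\theta}$ lies in the column space of $\bm{U}$ (both $\bm{t}$ and $\bm{\theta}$ satisfy the constraint, and in the local linear model $\bm{t}-\bm{\theta}=\bm{U}(\bm{s}-\bm{\alpha})$), we have $\bm{t}-\bm{\theta} = \bm{U}\bm{U}^H(\bm{t}-\bm{\theta})$, hence
\begin{equation}
\mathsf{cov}(\bm{t},\bm{t}) = \bm{U}\,\mathsf{cov}(\bm{s},\bm{s})\,\bm{U}^H \geq \bm{U}(\bm{U}^H\bm{J}\bm{U})^\dagger\bm{U}^H. \notag
\end{equation}
The equality condition transfers directly: equality in Theorem \ref{thm_complex_crlb} for $\bm{s}$ reads $\bm{s}-\bm{\alpha} = \tilde{\bm{J}}^\dagger\,\partial\ln\tilde p/\partial\bm{\alpha}^* = (\bm{U}^H\bm{J}\bm{U})^\dagger\bm{U}^H\,\partial\ln p/\partial\bm{\theta}^*$, and left-multiplying by $\bm{U}$ yields the stated condition $\bm{t}-\bm{\theta} = \bm{U}(\bm{U}^H\bm{J}\bm{U})^\dagger\bm{U}^H\,\partial\ln p/\partial\bm{\theta}^*$.

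The main obstacle I anticipate is rigor around the reparametrization: the substitution $\bm{\theta}=\bm{\theta}_0+\bm{U}\bm{\alpha}$ is only an exact parametrization of the constraint surface when $\bm{f}$ is affine; in general $\bm{U}$ depends on $\bm{\theta}$ and the surface is curved. The standard fix, which I would adopt, is to note that the CRB depends on $p$ and its constraints only through first-order (tangent) information at the true $\bm{\theta}$, so it is legitimate to work with the holomorphic local chart whose differential at $\bm{\theta}$ is $\bm{U}$; equivalently one can argue directly with Lagrange-multiplier/orthogonality arguments in the style of \cite{Stoica1998} rather than an explicit chart. A second, more mechanical point requiring care is confirming that the constrained regularity condition holds for $\tilde p$ and that the claim "$\bm{t}-\bm{\theta}\in\mathsf{range}(\bm{U})$" is the right way to encode the constraint on the estimator — this is where holomorphy of $\bm{f}$ is essential, since it guarantees the Wirtinger chain rule produces no $\partial/\partial\bm{\theta}^T$ cross-terms and keeps the FIM transformation clean. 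I would make sure to spell out why the Moore–Penrose inverse (rather than an ordinary inverse) is the correct object, namely because $\bm{U}^H\bm{J}\bm{U}$ may be singular when the unconstrained problem is non-identifiable, exactly as in Theorem \ref{thm_complex_crlb}.
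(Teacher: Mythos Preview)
Your reparametrization and the computation of the reduced FIM $\bm{U}^H\bm{J}\bm{U}$ are fine, but the lifting step from $\bm{s}$ back to $\bm{t}$ has a real gap. You assert that $\bm{t}-\bm{\theta}\in\mathsf{range}(\bm{U})$ because ``both $\bm{t}$ and $\bm{\theta}$ satisfy the constraint.'' The hypothesis of the theorem, however, is only that $\bm{t}$ is \emph{unbiased} on the constraint set $\Theta$; there is no assumption that the realized value $\bm{t}(\bm{y})$ lies on $\Theta$ (or on its tangent space). Without that, the identity $\mathsf{cov}(\bm{t},\bm{t})=\bm{U}\,\mathsf{cov}(\bm{s},\bm{s})\,\bm{U}^H$ fails, and the weaker statement you \emph{do} obtain, namely $\bm{U}^H\mathsf{cov}(\bm{t},\bm{t})\bm{U}\geq(\bm{U}^H\bm{J}\bm{U})^\dagger$, does not imply $\mathsf{cov}(\bm{t},\bm{t})\geq\bm{U}(\bm{U}^H\bm{J}\bm{U})^\dagger\bm{U}^H$: for a projector $\bm{P}=\bm{U}\bm{U}^H$ and PSD $\bm{C}$, the difference $\bm{C}-\bm{P}\bm{C}\bm{P}$ is in general indefinite. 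The same obstruction breaks your transfer of the equality condition.

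The paper sidesteps this entirely by not reparametrizing. It applies the covariance Cauchy--Schwarz inequality (Lemma~\ref{thm_var_cov_ineq}) directly to the pair $\bigl(\bm{t},\,\bm{U}^H\partial\ln p/\partial\bm{\theta}^*\bigr)$. The key input is not that $\bm{t}$ lies on the manifold but the cross-covariance identity
\[
\mathsf{cov}\!\left(\bm{t},\,\bm{U}^H\frac{\partial\ln p}{\partial\bm{\theta}^*}\right)
=\mathsf{E}\!\left[(\bm{t}-\bm{\theta})\frac{\partial\ln p}{\partial\bm{\theta}^T}\right]\bm{U}=\bm{U},
\]
established in Corollary~\ref{cor_expectation_U_equal_U} via Lemma~\ref{thm_expectation_delta_equal_delta} (which is where the holomorphic implicit function theorem and chain rule are actually used). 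Plugging this into Lemma~\ref{thm_var_cov_ineq} gives the bound immediately, and the equality condition drops out of the equality case of that lemma. If you want to keep your reparametrization viewpoint, the fix is to abandon the induced estimator $\bm{s}$ and instead prove the cross-covariance identity above; that is precisely the content of the paper's Lemma~\ref{thm_expectation_delta_equal_delta}.
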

\begin{proof}
See Appendix B.
\end{proof}

With Theorem \ref{them_complex_constrained_crb}, we are ready to
derive the CRB for blind channel estimators in redundant block
transmission systems, as shown in the following sections.

\section{System Model and Problem Formulation} \label{sec_sys_model}
In this section we formulate the blind channel estimation problem in
a redundant block transmission system using an equivalent
discrete-time baseband model.

The block diagram of a block transmission system is shown in Figure
1. Let the $n$th source block be expressed as an $M\times 1$ column
vector
\begin{equation} \bm{s}(n) = [\begin{array}{cccc} s_0(n) & s_1(n) &
\cdots & s_{M-1}(n)
\end{array}]^T,
\end{equation}
where each entry in the vector is a modulation symbol expressed as a
complex value. The vector $\bm{s}(n)$ is precoded by a full-rank
$M$-by-$M$ matrix $\tilde{\bm{F}}$ to obtain
\begin{equation}
\bm{u}(n) = \tilde{\bm{F}} \bm{s}(n).
\end{equation}
In an OFDM system, $\tilde{\bm{F}}$ equals to $\bm{W}^H$, the
normalized inverse discrete Fourier transform (IDFT) matrix; in a
single carrier (SC) system, $\tilde{\bm{F}}$ equals to the identity
matrix $\bm{I}_M$. We assume the general case that $\tilde{\bm{F}}$
is nonsingular as in \cite{Barbarossa2002}. For convenience we
define
\begin{equation}
P \triangleq M + L.
\end{equation}
The precoded $M$-vector $\bm{u}(n)$ is then added redundancy to
obtain a $P$-vector $$\bm{x}(n) = \bm{R}\bm{u}(n)$$ where $\bm{R}$
is a $P\times M$ full rank matrix representing the type of
redundancy added. In a cyclic prefix (CP) system, we have
\begin{equation}
\bm{R} = \left[ \begin{array}{cc}
\bm{0}_{L\times (M-L)} & \bm{I}_L \\
\bm{I}_{M-L} & \bm{0}_L \\
\bm{0}_{L \times (M-L)} & \bm{I}_L
\end{array} \right],
\label{eq:Rcp}
\end{equation}
while in a ZP system, we have
\begin{equation}
\bm{R} = \left[ \begin{array}{c} \bm{I}_M \\ \bm{0}_{L\times M}
\end{array} \right].\label{eq:Rzp}
\end{equation}
In general, we have a unified expression for $\bm{x}(n)$,
\begin{equation}
\bm{x}(n) = \bm{R}\tilde{\bm{F}}\bm{s}(n)=\bm{F}
\bm{s}(n)\label{eq:xFs}
\end{equation}
where \begin{equation} \bm{F} \triangleq \bm{R}\tilde{\bm{F}}
\label{eq:F=RF_tilde}
\end{equation}
is a full-rank $P\times M$ matrix. Note that ZP-OFDM, SC-ZP,
CP-OFDM, and SC-CP systems are all special cases of the redundant
block transmission system considered here.

After parallel-to-serial conversion, $\bm{x}(n)$ is sent to the
channel, modeled as a finite impulse response (FIR) filter
characterized by its $z$-transform
\begin{equation}
H(z) = \sum_{l=0}^L h_l z^{-l}.
\end{equation}
The output of the FIR channel is the convolution of the input and
the channel impulse response. Suppose the transmitter sends $N$
consecutive blocks defined as
\begin{equation}
\bm{s}_N \triangleq [\begin{array}{cccc}
\bm{s}(0)^T & \bm{s}(1)^T & \ldots & \bm{s}(N-1)^T \end{array}]^T.
\end{equation}
The $n$th received block, $\bm{y}(n)$, $n=0, 1, 2, ...,
N-1$, can be expressed as
\begin{equation}
\bm{y}(n) =
\bm{H}_1\left[\begin{array}{c}\left[{\bm{x}}(n-1)\right]_{M+1:P}\\\bm{x}(n)\\\end{array}\right]
+ \bm{e}(n),\label{eq:yGHxen}
\end{equation}
where $\bm{H}_1$ is a $P$-by-$(P+L)$ Toeplitz matrix \cite{Horn1985}
of the form
\begin{equation}
\bm{H}_1 \triangleq \left[
\begin{array}{ccccccc}
h_L& \cdots& h_1& h_0 & 0 & \cdots & 0 \\
0 & h_L & \cdots & h_1 & h_0 & \ddots & \vdots \\
\vdots& \ddots& \ddots&  & \ddots & \ddots & 0 \\
0& \cdots& 0& h_L & \ddots & h_1 & h_0 \\
\end{array}
\right], \label{fatToeplitz}
\end{equation}
and $\bm{e}(n)$ is the additive white Gaussian noise (AWGN) at the
receiver side with zero mean and covariance matrix
\begin{equation}
\mathsf{E}[\bm{e}(n)\bm{e}^H(n)] = \sigma^2 \bm{I}_P.
\end{equation}
Notice that, for $n=0$, the values of the first $L$ entries of
$\bm{y}(0)$ depend on undetermined vector $\bm{x}(-1)$. So for the
channel estimation problem, we drop these $L$ samples from the
observation and use only the last $M$ samples of the $0$th block
$\bm{y}(0)$, which can be expressed as
\begin{equation}
[\bm{y}(0)]_{L+1:P} = \bm{H}_2\bm{x}(0)
 + \bm{e}(0)\label{eq:yGHxe0}\end{equation} where $\bm{H}_2$ is an $M\times P$
Toeplitz matrix of the same form as $\bm{H}_1$ in
(\ref{fatToeplitz}). Collect the $(NP-L)$ samples observed by the
receiver and define the observation vector as
\begin{equation}
\bm{y}_N \triangleq [\begin{array}{cccc} [\bm{y}(0)]_{L+1:P}^T &
\bm{y}(1)^T & \ldots & \bm{y}(N-1)^T \end{array}]^T. \label{main_eq}
\end{equation}
Then it can be shown, from (\ref{eq:xFs}), (\ref{eq:yGHxen}), and
(\ref{eq:yGHxe0}), that
\begin{equation}
\bm{y}_N = \bm{G} \bm{H} ( \bm{I}_N \otimes \bm{F} ) \bm{s}_N +
\bm{e}_N \label{main_eq1}
\end{equation} where \begin{equation} \bm{G} \triangleq
\left[\begin{array}{ccc} \bm{0}_{(NP-L)\times L} &
\bm{I}_{(NP-L)\times (NP-L)} & \bm{0}_{(NP-L\times L)}
\end{array} \right],
\end{equation}
$\bm{H}$ is a $(NP+L)$-by-$NP$ Toeplitz matrix of the form
\begin{equation}
\bm{H} \triangleq \left[
\begin{array}{cccc}
h_0 & 0 & \cdots & 0 \\
h_1 & \ddots & \ddots & \vdots \\
\vdots & \ddots & \ddots & 0 \\
h_L & \ddots & \ddots & h_0 \\
0 & \ddots & \ddots & h_1 \\
\vdots & \ddots & \ddots & \vdots \\
0 & \ldots & 0 & h_L
\end{array}
\right], \label{toeplitz}
\end{equation}
and the $(NP-L)$-vector $\bm{e}_N$ is defined as
\begin{equation} \bm{e}_N \triangleq [\begin{array}{cccc}
[\bm{e}(0)]_{L+1:P}^T & \bm{e}(1)^T & \ldots & \bm{e}(N-1)^T
\end{array}]^T. \label{eq:noise_eN}
\end{equation}
The product of matrices $\bm{G}$ and $\bm{H}$ defined above is actually equivalent
to a matrix in a ``fat" Toeplitz form as in (\ref{fatToeplitz}). The
reason we use this seemingly more complicated expression here is for
convenience of CRB derivation, as will be shown later in the next
section.


Now, the goal of blind channel estimation in a redundant block
transmission system is to estimate
\begin{equation}
\bm{h} \triangleq \left[ \begin{array}{cccc} h_0 & h_1 & \ldots &
h_L
\end{array} \right]^T
\end{equation}
from the observation $\bm{y}_N$ defined in (\ref{main_eq}) and
(\ref{main_eq1}).

Although the problem studied here considers  block transmission
systems with all kinds of linearly redundant precoding including ZP
systems, it should be noted that even when $\bm{R}$ is taken as the
form in (\ref{eq:Rzp}), the problem is slightly different from that
defined in \cite{Barbarossa2002} due to the fact that the first $L$
entries of $\bm{y}(0)$ are not taken as part of the observation
vector $\bm{y}_N$.

\section{CRB for Blind Channel Estimators} \label{sec_crb_blind}
We define the $(NM+L+1)$-fold parameter vector $\bm{\theta}$ as
\begin{equation}
\bm{\theta} \triangleq \left[\begin{array}{cc} \bm{h}^T & \bm{s}_N^T
\end{array} \right]^T,
\end{equation}
where $\bm{s}_N$ is treated as a nuisance parameter
\cite{VanTrees2007}. To derive the CRB for blind channel estimators,
we first identify the probability density function of the
observation vector $\bm{y}_N$ given $\boldsymbol{\theta}$ and
calculate the complex Fisher information matrix (FIM) $\bm{J}$
defined in (\ref{eq_def_fim}). Then we apply Theorem
\ref{them_complex_constrained_crb} to obtain the CRB expression in
terms of $\bm{F}, \bm{h}, $ and $\bm{s}_N$.

By the assumption of additive white Gaussian noise, the probability
density function of the receiver's observation given $\bm{\theta}$
is
\begin{align}
&p(\bm{y}_N; \bm{\theta}) \notag \\
&\quad = \frac{1}{\left( \pi\sigma \right)^{NP-L}} \exp \left(
-\frac{1}{\sigma^2} \left\Vert \bm{y}_N - \bm{G}\bm{H}\left(
\bm{I}_N \otimes \bm{F} \right) \bm{s}_N \right\Vert^2 \right).
\end{align}
The Toeplitz matrix $\bm{H}$ can be rewritten as
\begin{equation}
\bm{H} = \sum_{l=0}^L h_l\bm{J}_l,
\end{equation}
where the $(i,j)$th element of $\bm{J}_l$ is defined as
\begin{equation}
[\bm{J}_l]_{i,j} = \left\{ \begin{array}{ll}
1, & \mbox{ if } i-j = l \\
0, & \mbox{ otherwise}
\end{array}, \right. l=0, 1, ..., L.
\end{equation}

In the following derivations, for simplicity, we will use $\bm{K}$
and $\bm{K}_l, l=0, 1, ..., L,$ to represent $ \bm{K} \triangleq
\bm{G}\bm{H}(\bm{I}_N\otimes \bm{F})$ and $ \bm{K}_l \triangleq
\bm{G}\bm{J}_l(\bm{I}_N\otimes \bm{F}), l=0, 1, ..., L,$
respectively. Note that $ \bm{K} = \sum_{l=0}^L h_l\bm{K}_l.$ To
calculate the FIM $\bm{J}$, we take partial derivatives on the
logarithm probability density function with respect to elements of
$\bm{\theta}$ and obtain the following equations.
\begin{align}
\frac{\partial \ln p}{\partial h^*_l} &= \dfrac{1}{\sigma^2} \bm{s}_N^H (\bm{I}_N \otimes \bm{F})^H \bm{J}_l^H \bm{G}^H \notag \\
&\quad \left[ \bm{y}_N - \bm{G}\bm{H}(\bm{I}_N \otimes \bm{F}) \bm{s}_N \right] \notag \\
&= \dfrac{1}{\sigma^2} \bm{s}_N^H \bm{K}_l^H \bm{e}_N.
\label{dev_1_1}
\end{align}
\begin{align}
\frac{\partial \ln p}{\partial h_l} &= \left( \frac{\partial \ln p}{\partial h^*_l} \right)^*. \label{dev_1_2} \\
\frac{\partial \ln p}{\partial \bm{s}_N^*} &= \frac{1}{\sigma^2} (\bm{I}_N \otimes \bm{F})^H \bm{H}^H \bm{G}^H \notag \\
&\quad \left[ \bm{y}_N - \bm{G}\bm{H}(\bm{I}_N \otimes \bm{F}) \bm{s}_N \right] \notag \\
&= \frac{1}{\sigma^2} \bm{K}^H \bm{e}_N. \label{dev_1_3} \\
\frac{\partial \ln p}{\partial \bm{s}_N} &= \left( \frac{\partial \ln p}{\partial \bm{s}_N^*} \right)^* \label{dev_1_4}
\end{align}
 Now, divide $\bm{J}$ into $4$ sub-matrices
\begin{align}
\bm{J} = \mathsf{E}\left[ \left( \frac{\partial \ln p}{\partial
\bm{\theta^*}} \right) \left( \frac{\partial \ln p}{\partial
\bm{\theta^*}} \right)^H \right] \triangleq \left[
\begin{array}{cccc}
\bm{J}_{0,0} & \bm{J}_{0,1} \\
\bm{J}_{1,0} & \bm{J}_{1,1} \\
\end{array} \right].
\end{align}
The entries of these sub-matrices can be written as
\begin{align}
[\bm{J}_{0,0}]_{i,j} &\triangleq \mathsf{E} \left[ \frac{\partial \ln p}{\partial h^*_i} \frac{\partial \ln p}{\partial h_j} \right], \\
[\bm{J}_{0,1}]_{i,j} &\triangleq \mathsf{E} \left[ \frac{\partial \ln p}{\partial h^*_i} \frac{\partial \ln p}{\partial [\bm{s}_N]_j} \right], \\
[\bm{J}_{1,0}]_{i,j} &\triangleq \mathsf{E} \left[ \frac{\partial
\ln p}{\partial [\bm{s}_N^*]_i} \frac{\partial \ln p}{\partial h_j}
\right],
\end{align} and
\begin{align}
[\bm{J}_{1,1}]_{i,j} &\triangleq \mathsf{E} \left[ \frac{\partial
\ln p}{\partial [\bm{s}_N^*]_i} \frac{\partial \ln p}{\partial
[\bm{s}_N]_j} \right],
\end{align} respectively. By equations (\ref{dev_1_1}, \ref{dev_1_2}, \ref{dev_1_3},
\ref{dev_1_4}) and the fact that $\mathsf{E}[\bm{e}_N\bm{e}_N^H] =
\sigma^2\bm{I}_{NP-L} $, we have the following results.
\begin{enumerate}
\item For the sub-matrix $\bm{J}_{0,0}$:
\begin{equation}
[\bm{J}_{0,0}]_{i,j} = \frac{1}{\sigma^2} \bm{s}_N^H \bm{K}_i^H
\bm{K}_j \bm{s}_N.
\end{equation}
Therefore, we can write $\bm{J}_{0,0}$ as
\begin{align}
\bm{J}_{0,0} &= \frac{1}{\sigma^2}
\left[\begin{array}{c}
\bm{s}_N^H\bm{K}_0^H \\
\vdots \\
\bm{s}_N^H\bm{K}_L^H
\end{array}\right]
\left[\begin{array}{c}
\bm{s}_N^H\bm{K}_0^H \\
\vdots \\
\bm{s}_N^H\bm{K}_L^H
\end{array}\right]^H. \label{j_00}
\end{align}

\item For the sub-matrix $\bm{J}_{0,1}$, the $i$th row of $\bm{J}_{0,1}$ is
\begin{align}
\mathsf{E}\left[ \frac{\partial \ln p}{\partial h^*_i} \left(
\frac{\partial \ln p} {\partial \bm{s}_N}^T \right) \right] &=
\frac{1}{\sigma^2} \bm{s}_N^H \bm{K}_i^H\bm{K}.
\end{align}
Therefore,
\begin{equation}
\bm{J}_{0,1} = \frac{1}{\sigma^2}
\left[ \begin{array}{c}
\bm{s}_N^H \bm{K}_0^H\bm{K} \\
\vdots \\
\bm{s}_N^H \bm{K}_L^H\bm{K}
\end{array} \right]. \label{j_01}
\end{equation}

\item For the sub-matrix $\bm{J}_{1,0}$, we have
\begin{equation}
\bm{J}_{1,0} = \bm{J}_{0,1}^H. \label{j_10}
\end{equation}

\item For the sub-matrix $\bm{J}_{1,1}$, we have
\begin{align}
\bm{J}_{1,1} &= \mathsf{E}\left[ \frac{\partial \ln p}{\partial
\bm{s}_N^*} \left( \frac{\partial \ln p}{\partial \bm{s}_N^*}
\right)^H \right] = \frac{1}{\sigma^2} \bm{K}^H\bm{K}. \label{j_11}
\end{align}
\end{enumerate}

Since $\bm{\theta} = [ \begin{array}{cc} \bm{h}^T & \bm{s}_N^T
\end{array} ]^T$ and $\bm{\theta}' \triangleq [ \begin{array}{cc}
(1/c) \bm{h}^T & c \bm{s}_N^T \end{array} ]^T$ will result in the
same observation vector $\bm{y}$ for any nonzero $c \in \mathbb{C}$,
an unbiased blind channel estimator will not exist. This is commonly
known as the scalar ambiguity issue and is usually resolved by
setting one element in $\bm{\theta}$ as already known
\cite{Barbarossa2002}.  Here we follow the approach in
\cite{Barbarossa2002} that we assume the $d$th element of $\bm{h}$,
$h_d$, is nonzero and already known to the receiver. In the
perspective of CRB for constrained parameters, this assumption
corresponds to the constraint function
\begin{equation}
f( \bm{\theta} ) = h_d - h_d^0,
\end{equation}
where $h_d^0$ is the realized value of $h_d$. Obviously $f$ is a
holomorphic map. We can then apply Theorem
\ref{them_complex_constrained_crb} and obtain the CRB with the
constraint function $f$ as
\begin{equation}
\mathsf{cov}(\boldsymbol{\theta}, \boldsymbol{\theta}) \geq
\bm{E}_d'^H \left( \bm{E}_d' \bm{J} \bm{E}_d'^H \right)^{-1}
\bm{E}_d', \label{eq_constrained_crb_blind_estimation}
\end{equation}
where $\bm{E}_d'$ is obtained by removing the $d$th row of
$\bm{I}_{NM+L+1}$, the identity matrix of order $(NM+L+1)$ (Note
that $\bm{E}_d'$ satisfies (\ref{eq:orthonormalU})). All elements on
the $d$th row and $d$th column of
(\ref{eq_constrained_crb_blind_estimation}) are zero, implying the
$d$th parameter, $h_d$, always has a zero MSE, consistent with the
fact that the receiver has {\it a priori} information
$\hat{h_d} = h_d^0$. Therefore we ignore the $d$th row and column of
(\ref{eq_constrained_crb_blind_estimation}) and only consider the
covariance bound for estimators of
$\boldsymbol{\theta}_d\triangleq[h_0, \ldots, h_{d-1}, h_{d+1},
\ldots, h_L, \bm{s}_N^T]^T$, (i.e., $\left( \bm{E}_d' \bm{J}
\bm{E}_d'^H \right)^{-1} $) in the following discussion. Let
$\tilde{\bm{J}} \triangleq \bm{E}_d' \bm{J} \bm{E}_d'^H
 $. Then the matrix $\tilde{\bm{J}}$ is of the form
\begin{equation} \label{eq_tilde_J}
\tilde{\bm{J}} \triangleq
\left[ \begin{array}{cc}
\bm{E}_d \bm{J}_{0,0} \bm{E}_d^H & \bm{E}_d\bm{J}_{0,1} \\
\bm{J}_{0,1}^H \bm{E}_d^H & \bm{J}_{1,1}
\end{array} \right]
\end{equation}
where $\bm{E}_d$ is an $L\times (L+1)$ matrix obtained by removing
the $d$th row from $\bm{I}_{L+1}$. Then, equation
(\ref{eq_constrained_crb_blind_estimation}) implies that for any
unbiased estimator $\hat{\bm{\theta}_d}$ of $\bm{\theta}_d$,
\begin{equation}
\mathsf{cov}\left( \hat{\bm{\theta}_d}, \hat{\bm{\theta}_d} \right)
\geq \tilde{\bm{J}}^{-1}. \label{eq:covth_geq_Jinv}
\end{equation}
Since we only focus on the performance of channel estimators, we can
simplify the above equation to obtain a bound for $\mathsf{cov}(
\hat{\bm{h}_d}, \hat{\bm{h}_d} )$, where $\bm{h}_d \triangleq [h_0,
..., h_{d-1}, h_{d+1}, ..., h_L]$. This can be done by noting that
(\ref{eq:covth_geq_Jinv}) implies
\begin{align}
&\left[ \begin{array}{cc} \bm{I}_L & \bm{0}_{L\times NM} \end{array} \right] \mathsf{cov}\left( \hat{\bm{\theta}_d},
 \hat{\bm{\theta}_d} \right) \left[ \begin{array}{c} \bm{I}_L^H \\ \bm{0}_{L\times NM}^H \end{array} \right] \geq \notag \\
&\quad\quad\quad\quad\quad\quad \left[ \begin{array}{cc} \bm{I}_L &
\bm{0}_{L\times NM} \end{array} \right] \tilde{\bm{J}}^{-1} \left[
\begin{array}{c} \bm{I}_L^H \\ \bm{0}_{L\times NM}^H \end{array}
\right].
\end{align}
The left side of the inequality corresponds to the covariance matrix
of $\bm{h}_d$. Therefore, the CRB for any unbiased estimator
$\hat{\bm{h}_d}$, of $\bm{h}_d$, can be calculated as the upper-left
$L$-by-$L$ sub-matrix of $\tilde{\bm{J}}^{-1}$:
\begin{equation}
\bm{C}_\mathrm{CRB} \triangleq \left[ \bm{E}_d \left( \bm{J}_{0,0} -
\bm{J}_{0,1} \bm{J}_{1,1}^{-1} \bm{J}_{0,1}^H \right) \bm{E}_d^H
\right]^{-1}. \label{eq:crlb}
\end{equation}
\begin{figure*}[t!]
\normalsize
\begin{equation}
\bm{C}_\mathrm{CRB}(\bm{h}, \bm{s}_N, \bm{F}, d) = \sigma^2 \left(
\bm{E}_d \ \tilde{\mathcal{U}} \left\{ \bm{I}_{(N-1)L} \otimes
\left[ (\bm{I}_N \otimes \bm{F})^* \bm{s}_N^* \bm{s}_N^T (\bm{I}_N
\otimes \bm{F})^T \right] \right\} \tilde{\mathcal{U}}^H \
\bm{E}_d^H \right)^{-1}. \label{eq_crb_blockTrans}
\end{equation}
\hrulefill \vspace*{4pt}
\end{figure*}
From Appendix \ref{appendix_proof_of_crb}, we obtain the final form
of the CRB for blind channel estimation in any given redundant block
transmission systems, as shown in (\ref{eq_crb_blockTrans}), in
terms of $\bm{h}$, $\bm{s}_N$, $d$, and $\bm{F}$. In
(\ref{eq_crb_blockTrans}), the matrix $\tilde{\cal U}$ has a size of
$(L+1)\times LPN(N-1)$ and can be determined by vector $\bm{h}$ and
matrix $\bm{F}$. The values of nonzero entries of $\tilde{{\cal U}}$
all come from the $(N-1)L$ orthonormal left annihilators of $\bm{K}
= \bm{G}\bm{H}(\bm{I}_N\otimes \bm{F})$ and are defined in details
in (\ref{eq:Utilde}).

Although the CRB considers all LRP, including ZP systems (i.e.,
$\bm{F}$ satisfies (\ref{eq:F=RF_tilde})(\ref{eq:Rzp})), the CRB
expression in (\ref{eq_crb_blockTrans}) does not reduce to that
obtained in \cite{Barbarossa2002, Su2008a} in this situation. The
reason is that in the derivation of this CRB, we do not use the
first $L$ entries of the first block as part of the observation
vector. In fact, as will be shown in Section \ref{sec_num_results},
with a smaller observation vector, the CRB values derived here are
expected to be slightly higher than those derived in
\cite{Barbarossa2002, Su2008a}.

\section{Numerical Results} \label{sec_num_results}
In this section we first compare the performance of existing blind
channel estimation algorithms for CP systems with the CRBs derived
in this paper. The subspace-based blind channel estimators proposed
in \cite{Su2007, Muquet2002} are chosen as benchmarks.

The simulation settings are concluded in Table
\ref{table_sim_setting}. The simulation results are averaged over
$N_{ch} = 500$ independent channel realizations. All elements in the
$i$th realization of $\bm{h}$, defined as $\bm{h}^{(i)} =
[\begin{array}{cccc}h_0^{(i)}&h_1^{(i)}&\cdots&h_L^{(i)}\end{array}]$,
are generated first as independent complex Gaussian random variables
with zero mean and unity variance and then normalized to satisfy
$\Vert\bm{h}^{(i)}\Vert^2=1$. For each channel realization, $N_s =
30$ trials with independent sets of QPSK modulated data symbols are
conducted to further average the channel estimation MSE, defined as:

\begin{equation}
\mbox{MSE} \triangleq {1\over
N_{ch}N_{s}}\sum_{i=1}^{N_{ch}}\sum_{j=1}^{N_{s}} \Vert
(h_{d_i}^{(i)} / \hat{h}^{(i,j)}_{d_i}) \hat{\bm{h}}^{(i,j)} -
\bm{h}^{(i)} \Vert^2,
\end{equation}
where $\hat{\bm{h}}^{(i,j)}$ is the estimate of the $i$th channel
realization in the $j$th trial, and $d_i$ is chosen as the index of
the channel tap with the maximal power: $d_i = \arg\max_{0\leq d\leq
L} \left|h_d^{(i)}\right|^2.$ Similarly, the CRB curves are
calculated according to
\begin{equation}
\mbox{CRB} \triangleq {1\over N_{ch}N_{s}}
\sum_{i=1}^{N_{ch}}\sum_{j=1}^{N_{s}}
\mathsf{tr}\left[{\bm{C}_\mathrm{CRB}(\bm{h}^{(i)},
\bm{s}_N^{(i,j)}, \bm{F}, d_i)}\right]
\end{equation} with the same channel and data set realizations.


\begin{table}[!t]
\renewcommand{\arraystretch}{1.3}
\caption{Simulation Settings for CP Systems}
\label{table_sim_setting} \centering
\begin{tabular}{|c|c|}
\hline
Block size $M$ & 12 \\
\hline
Number of received blocks $N$ & 8, 25, 50 \\
\hline
Channel order $L$ & 4 \\
\hline
Modulation & QPSK \\
\hline
\end{tabular}
\end{table}

\begin{figure}[!t]
\centering
\includegraphics[width=3.5in]{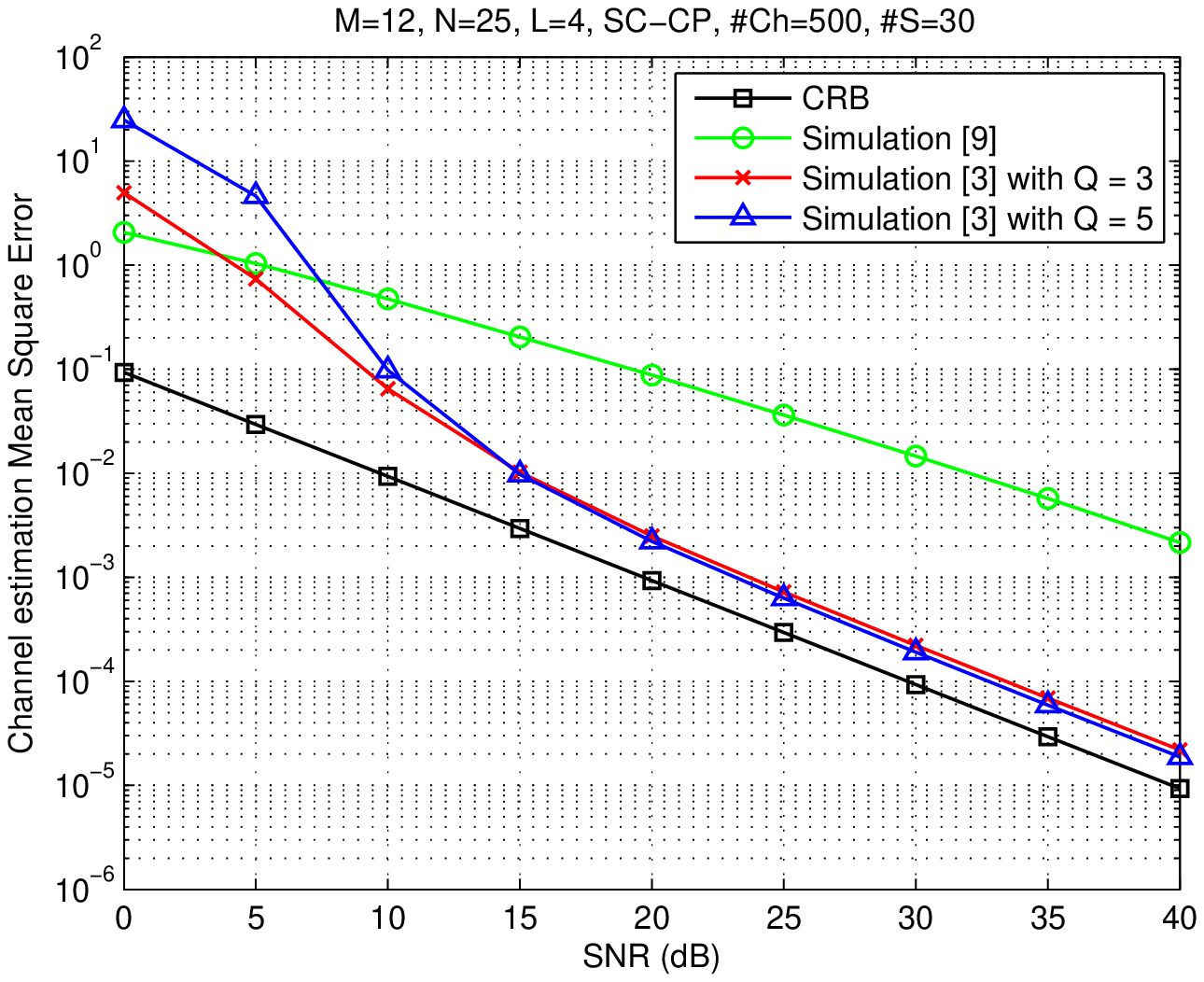}
\caption{CRB and simulatoin results for SC-CP system with 25 blocks}
\label{result_sccp_J25}
\end{figure}

\begin{figure}[!t]
\centering
\includegraphics[width=3.5in]{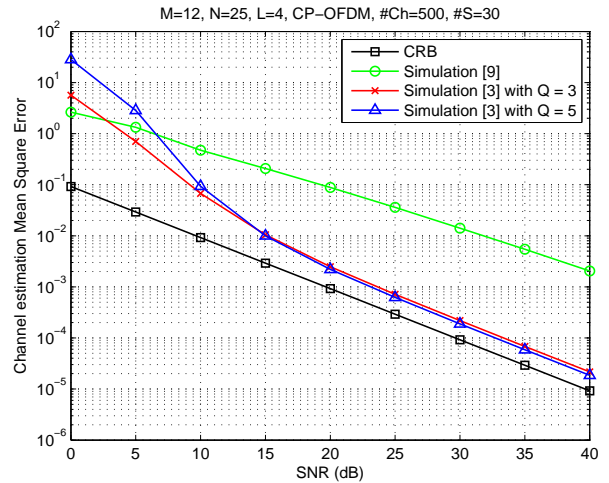}
\caption{CRB and simulatoin results for OFDM-CP system with 25
blocks} \label{result_ofdm_J25}
\end{figure}

\begin{figure}[!t]
\centering
\includegraphics[width=3.5in]{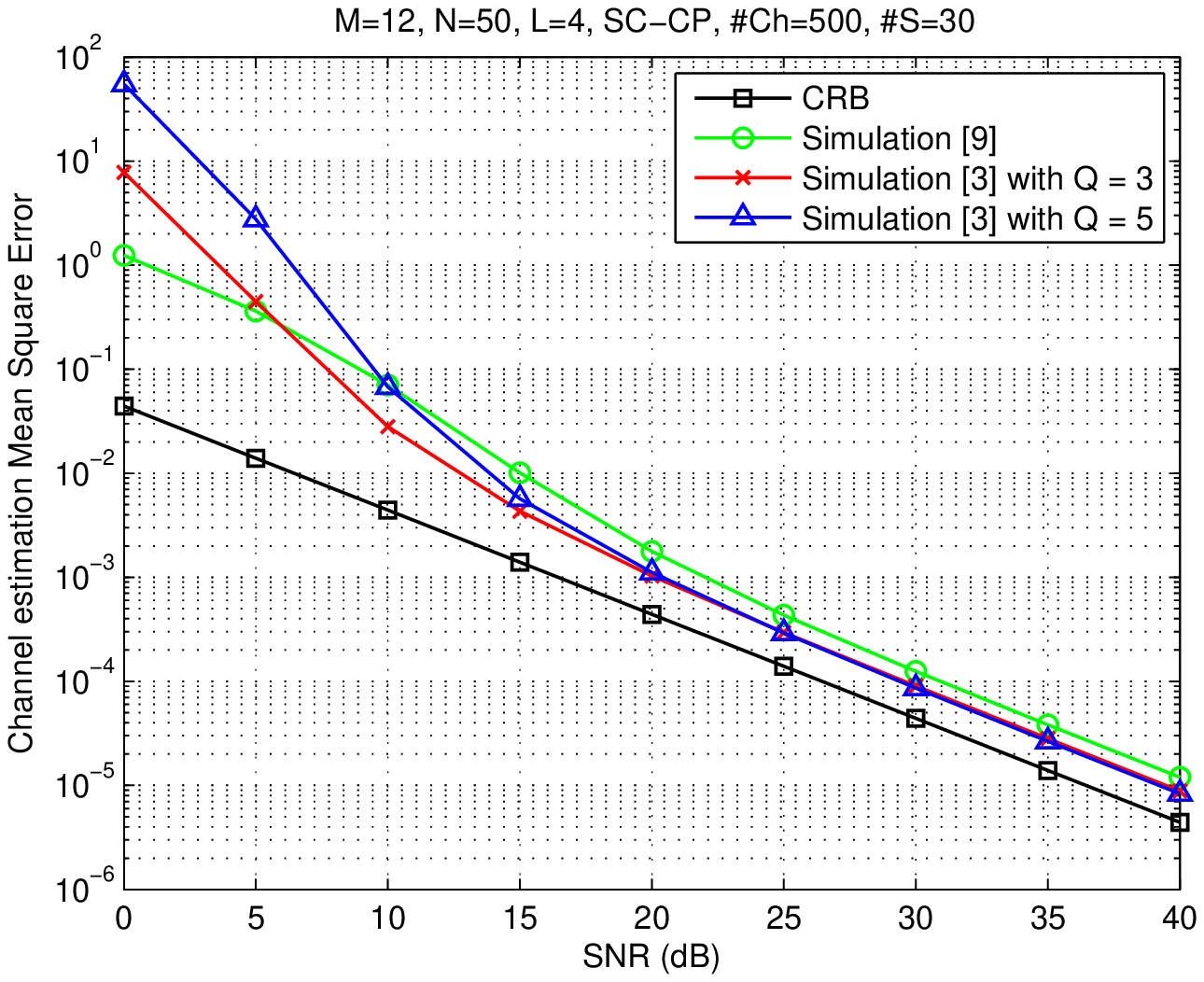}
\caption{CRB and simulatoin results for SC-CP system with 50 blocks}
\label{result_sccp_J50}
\end{figure}

\begin{figure}[!t]
\centering
\includegraphics[width=3.5in]{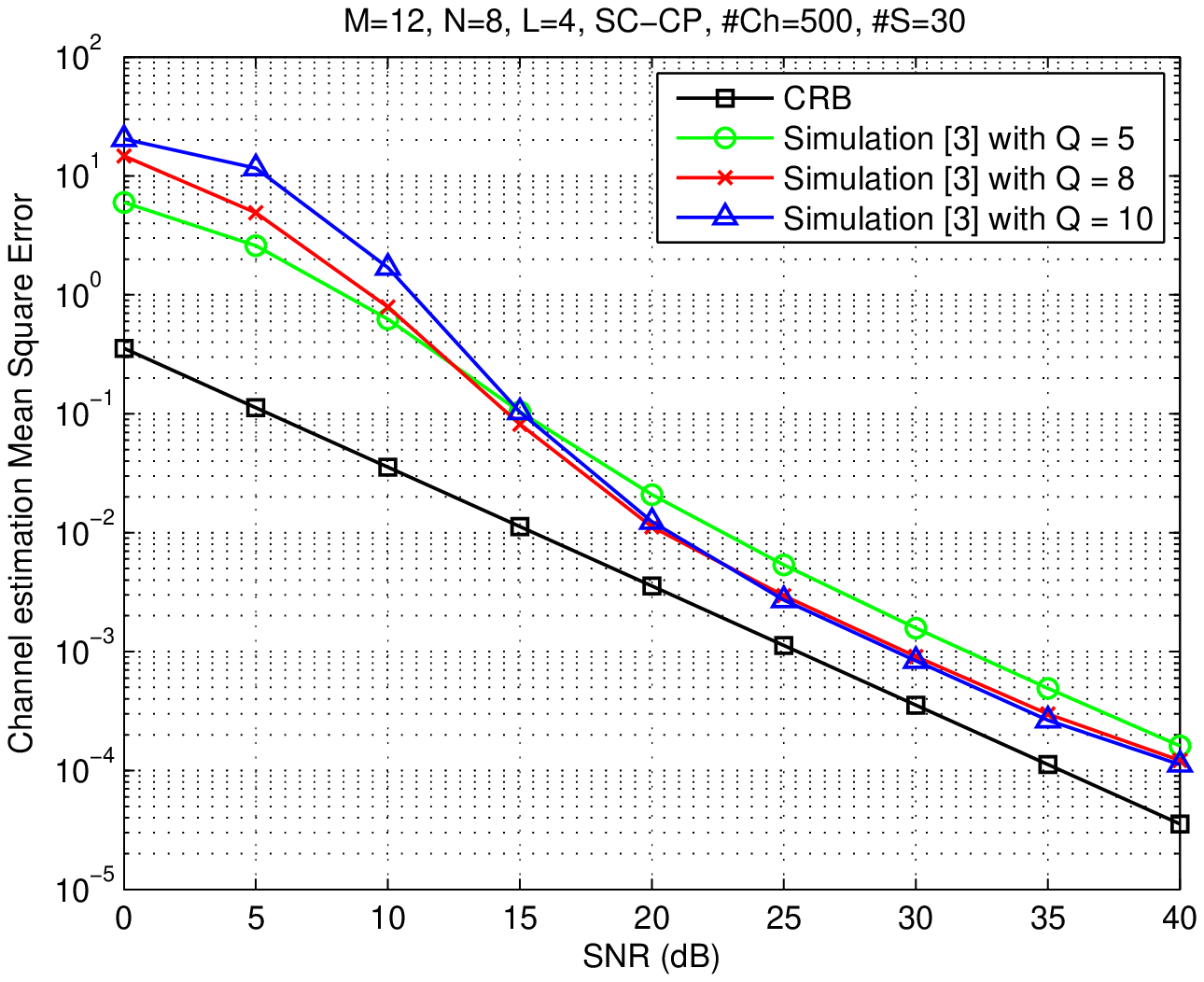}
\caption{CRB and simulatoin results for SC-CP system with 8 blocks}
\label{result_sccp_J8}
\end{figure}

\begin{figure}[!t]
\centering
\includegraphics[width=3.5in]{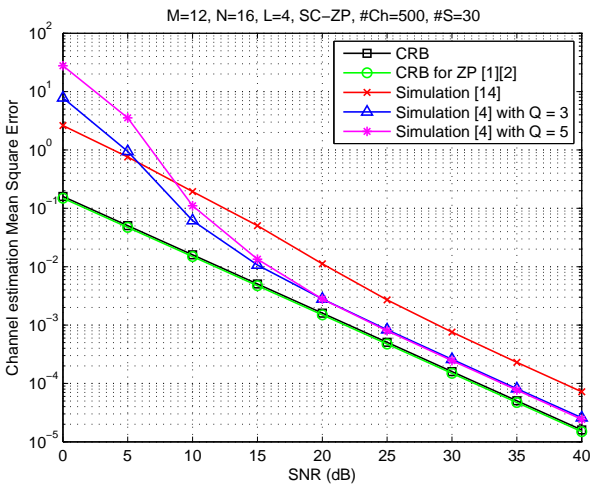}
\caption{CRB and simulatoin results for SC-ZP system}
\label{result_crb_sczp}
\end{figure}

We first consider the blind channel estimation problem in SC-CP
($\tilde{\bm{F}} = \bm{I}_M$, $\bm{R}$ is chosen as in
(\ref{eq:Rcp})) systems with 25 received blocks available. The
performance of algorithms proposed in \cite{Su2007} and
\cite{Muquet2002}, as well as the CRB derived in this paper, are
depicted in Figure \ref{result_sccp_J25}. We can see the algorithm
in \cite{Su2007} has advantage over that in \cite{Muquet2002} in
high-SNR region while the algorithm in \cite{Muquet2002} has better
performance with a low SNR. As predicted in \cite{Su2007},
increasing the algorithm parameter $Q$ from 3 to 5 gains a slight
MSE improvement in high SNR region but results in a large
performance degradation when SNR is low. Under high SNR values, for
both curves of algorithms\cite{Su2007}, the gap between the
simulation MSE result and the CRB, tends to approach a constant in
log scale (around 3 to 4 dB in SNR). All these MSE results do not
achieve the performance lower bound suggested by the CRB.

Figure \ref{result_ofdm_J25} considers CP-OFDM systems (i.e.,
$\tilde{\bm{F}} = \bm{W}^H$)with the same simulation settings. The
performance curves are almost identical to those in Figure
\ref{result_sccp_J25}. This suggests the precoder $\tilde{\bm{F}}$
does not significantly affect the MSE and CRB curves for blind
channel estimators in CP systems. All following simulations will
then consider single carrier systems only.

In Figure \ref{result_sccp_J50}, we consider the problem in SC-CP
systems with 50 received blocks. The MSE and CRB curves are all
lower than those in previous plots. The gaps between the CRB and MSE
results still converge to constant values in log scale in the
high-SNR region. For both curves of algorithm \cite{Su2007}, the gap
is around 3 dB; for algorithm \cite{Muquet2002}, it is around 5dB.

The case with only 8 available received blocks is studied in Figure
\ref{result_sccp_J8}. As algorithm in \cite{Muquet2002} does not
work appropriately with such a small amount of received data, only
performance curves of algorithm in \cite{Su2007}, with large values
of $Q$, are presented. We observe that performance in the high-SNR
region improves when the algorithm parameter $Q$ is chosen as a
large value. But the improvement slows down gradually as $Q$
increases to 10. Similarly, it results in performance degradation
with low SNR values. For the high SNR region, in the best situation
shown ($Q=10$), the gap between the MSE performance and the CRB is
around 4 to 5 dB in high SNR region.

We now turn our attention to ZP systems (i.e., $\bm{F}$ satisfies
(\ref{eq:F=RF_tilde})(\ref{eq:Rzp})). We compared the derived CRB,
CRB reported in \cite{Barbarossa2002, Su2008a}, and MSE performances
of existing blind algorithms \cite{Su2007a, Scaglione1999b} in
Figure \ref{result_crb_sczp}. In Figure \ref{result_crb_sczp}, our
CRB is slightly higher than that reported in \cite{Barbarossa2002,
Su2008a}, with a margin of less than 0.5dB. This is consistent with
the fact that the first $L$ entries of the first block are dropped
from the observation vector (as stated at the end of Sections
\ref{sec_sys_model} and \ref{sec_crb_blind}). This discrepancy in
two CRBs, however, is much smaller than the gap between MSE
performances of existing algorithms and the CRBs. The generalized
algorithm in \cite{Su2007a} has a performance gap to the CRB of
around 2 dB in high SNR region. The performance curve of an earlier
method in \cite{Scaglione1999b} is around 7dB from the CRB bound.

Summarizing all above simulation results, in both CP and ZP systems,
all existing blind channel estimation algorithms \cite{Muquet2002,
Su2007, Scaglione1999b, Su2007a} do not achieve the lower bound
suggested by CRB. In the high-SNR region, the generalized subspace
methods proposed in \cite{Su2007a} \cite{Su2007} obtain the best
performances among others, regardless of the number of available
received blocks. But a gap of around 2 to 4 dB from the best
performance to the CRB is constantly present, suggesting there is
still room for performance improvement of blind channel estimators
in these systems.

%

\section{Conclusions and Future Work} \label{sec_conclusions}
In this paper, we derived the CRB for blind channel estimators in
redundant block transmission systems. The derived CRB is valid for
any full-rank linear redundant precoder (LRP), including both ZP and
CP systems and is an extension to the result in
\cite{Barbarossa2002,Su2008a}. We compared the derived CRB with the
performances of the existing subspace-based blind channel estimators
for CP \cite{Su2007, Muquet2002} and ZP systems
\cite{Scaglione1999b,Su2007a}. Numerical results show that these
existing blind channel estimators for ZP and CP systems do not
achieve the CRB and there is still some room for performance
improvement.

In addition, a simple form of CRB formula is derived for
unconstrained and constrained complex parameters. We extended this
form, originally for \emph{real}
parameters\cite{Gorman1990,Stoica1998,VanTrees2001}, to the case of
multiple \emph{complex} parameters. The results not only facilitate
the derivation of CRB for the blind channel estimation problem of
interest, but also are expected to be useful for other complex CRB
derivations.

In the future, there are a few research directions worthy of being
explored. The CRB derived in this paper is applicable to blind
channel estimators which do not have \textit{a priori} information
about the transmitted signals\footnote{We ignore the pilot symbol
required to eliminate the scalar ambiguity here.}. It would be
desirable to derive the CRB for blind channel estimator that assume
some stronger assumptions about the transmitted signals. For
example, the receiver should know the modulation scheme the
transmitter uses, which corresponds to the finite alphabet
assumption \cite{Zhou2001}. One possible approach for this is to use
the CRB for constrained complex parameters. Take quadrature
phase-shift keying (QPSK) for example. The finite-alphabet
assumption that QPSK is used can be modeled by setting the
constraint function as $\bm{f}( \bm{\theta} ) = [
\begin{array}{cccc} s_0(0)^4 & s_1(0)^4 & \ldots &
s_{M-1}(N-1)^4 \end{array} ]^T - [
\begin{array}{cccc} 1 & 1 & \ldots & 1 \end{array} ]^T$, assuming
each modulation symbol has unit power. It is obvious that $\bm{f}$
is holomorphic, so Theorem \ref{them_complex_constrained_crb} can be
applied. This approach, however, cannot be directly extended to more
complex modulation schemes such as 16-quadrature amplitude
modulation (16-QAM). And this will require additional research
efforts.

Furthermore, the receiver may, in additional to the modulation
scheme, also know the joint probability distribution of the
transmitted symbols. The CRB for this case, at first glance, can be
derived by Bayesian Cram\'{e}r-Rao bound (BCRB), or Van Trees
inequality \cite{VanTrees2001,Gill1995}. BCRB is a lower bound of
variance for all estimators knowing the \textit{a priori}
distribution of the unknown parameters. The BCRB, however, cannot be
applied to discrete parameters, such as modulation symbols. An
extension of the original BCRB formula to discrete parameters or a
new bound for such cases is desirable.

\appendices
\section{Proof of Theorem \ref{thm_complex_crlb}} In order to prove Theorem
\ref{thm_complex_crlb}, we apply the following lemma analogous to
the well-known Cauchy-Schwartz inequality, which can be considered
as the Cauchy-Schwartz inequality for random vectors. Although this
lemma has been already proved in \cite{Gorman1990,Rao2000}, we show
the proof here because it helps illustrate the necessary and
sufficient condition of when the equality of CRB holds.

\begin{lem} \label{thm_var_cov_ineq}
For any two random vectors $\boldsymbol{x}, \boldsymbol{y} \in
\mathbb{F}^n$, where $\mathbb{F}$ can be $\mathbb{R}$ or
$\mathbb{C}$,
\begin{equation}
\mathsf{cov}( \boldsymbol{y}, \bm{y} ) \geq \mathsf{cov}(
\boldsymbol{y}, \boldsymbol{x} ) \mathsf{cov}( \boldsymbol{x},
\bm{x} )^{\dagger} \mathsf{cov}( \boldsymbol{x}, \boldsymbol{y} ).
\label{var_cov_ineq}
\end{equation}
\end{lem}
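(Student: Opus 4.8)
The plan is to prove the matrix inequality \eqref{var_cov_ineq} by exhibiting a nonnegative-definite matrix whose block structure forces the desired Schur-complement-type bound. First I would center the random vectors, writing $\tilde{\bm{x}} \triangleq \bm{x} - \mathsf{E}(\bm{x})$ and $\tilde{\bm{y}} \triangleq \bm{y} - \mathsf{E}(\bm{y})$, so that $\mathsf{cov}(\bm{x},\bm{x}) = \mathsf{E}[\tilde{\bm{x}}\tilde{\bm{x}}^H]$ and similarly for the other terms. The key construction is the random vector
\begin{equation}
\bm{z} \triangleq \tilde{\bm{y}} - \mathsf{cov}(\bm{y},\bm{x})\,\mathsf{cov}(\bm{x},\bm{x})^{\dagger}\,\tilde{\bm{x}}, \notag
\end{equation}
which is the residual of $\tilde{\bm{y}}$ after projecting out the part ``explained'' by $\tilde{\bm{x}}$. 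Since $\mathsf{E}[\bm{z}\bm{z}^H]$ is a Gram-type matrix it is automatically nonnegative-definite, and expanding it gives
\begin{equation}
\mathsf{E}[\bm{z}\bm{z}^H] = \mathsf{cov}(\bm{y},\bm{y}) - \mathsf{cov}(\bm{y},\bm{x})\,\mathsf{cov}(\bm{x},\bm{x})^{\dagger}\,\mathsf{cov}(\bm{x},\bm{y}), \notag
\end{equation}
provided the cross terms collapse correctly. Rearranging yields exactly \eqref{var_cov_ineq}.

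The step that needs genuine care — and which I expect to be the main obstacle — is handling the Moore-Penrose pseudoinverse rather than an ordinary inverse, since $\mathsf{cov}(\bm{x},\bm{x})$ may be singular. Concretely, when expanding $\mathsf{E}[\bm{z}\bm{z}^H]$ one obtains a cross term $-\mathsf{cov}(\bm{y},\bm{x})\,\mathsf{cov}(\bm{x},\bm{x})^{\dagger}\,\mathsf{cov}(\bm{x},\bm{y})$ from the $\tilde{\bm{x}}\tilde{\bm{y}}^H$ piece and a quadratic term $+\mathsf{cov}(\bm{y},\bm{x})\,\mathsf{cov}(\bm{x},\bm{x})^{\dagger}\,\mathsf{cov}(\bm{x},\bm{x})\,\mathsf{cov}(\bm{x},\bm{x})^{\dagger}\,\mathsf{cov}(\bm{x},\bm{y})$ from the $\tilde{\bm{x}}\tilde{\bm{x}}^H$ piece; the two combine into a single copy of the cross term only if $\mathsf{cov}(\bm{x},\bm{x})^{\dagger}\,\mathsf{cov}(\bm{x},\bm{x})\,\mathsf{cov}(\bm{x},\bm{x})^{\dagger} = \mathsf{cov}(\bm{x},\bm{x})^{\dagger}$, which is one of the defining Penrose identities and therefore holds. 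One must also verify that $\mathsf{cov}(\bm{y},\bm{x})\,\mathsf{cov}(\bm{x},\bm{x})^{\dagger}\,\mathsf{cov}(\bm{x},\bm{x}) = \mathsf{cov}(\bm{y},\bm{x})$; this is the statement that the row space of $\mathsf{cov}(\bm{y},\bm{x})$ is contained in the range of $\mathsf{cov}(\bm{x},\bm{x})$, which follows because any such cross-covariance factors through $\bm{x}$ (formally, $\mathsf{cov}(\bm{y},\bm{x}) = \mathsf{E}[\tilde{\bm{y}}\tilde{\bm{x}}^H]$ vanishes on $\ker \mathsf{cov}(\bm{x},\bm{x})$, since if $\bm{v}$ lies in that kernel then $\mathsf{E}\|\tilde{\bm{x}}^H\bm{v}\|^2 = 0$, forcing $\tilde{\bm{x}}^H\bm{v} = 0$ almost surely).

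With those pseudoinverse identities in hand the expansion of $\mathsf{E}[\bm{z}\bm{z}^H]$ goes through mechanically, and I would note in passing that the argument works verbatim over $\mathbb{F} = \mathbb{R}$ or $\mathbb{F} = \mathbb{C}$ since only the Hermitian/transpose conjugation and nonnegative-definiteness of Gram matrices are used — this is the reason the lemma is stated for both fields at once. Finally, for later use in the equality condition of Theorem \ref{thm_complex_crlb}, I would record that equality in \eqref{var_cov_ineq} holds if and only if $\bm{z} = \bm{0}$ with probability $1$, i.e. $\tilde{\bm{y}} = \mathsf{cov}(\bm{y},\bm{x})\,\mathsf{cov}(\bm{x},\bm{x})^{\dagger}\,\tilde{\bm{x}}$ almost surely; this is immediate because a nonnegative-definite matrix $\mathsf{E}[\bm{z}\bm{z}^H]$ is zero exactly when $\bm{z}$ vanishes almost surely. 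Then Theorem \ref{thm_complex_crlb} follows by applying the lemma with $\bm{y} \leftarrow \bm{t} - \bm{\theta}$ and $\bm{x} \leftarrow \partial \ln p / \partial \bm{\theta}^*$, after checking that $\mathsf{cov}(\bm{t}-\bm{\theta}, \partial\ln p/\partial\bm{\theta}^*) = \bm{I}$ (this is where the unbiasedness and regularity conditions enter) and that $\mathsf{cov}(\partial\ln p/\partial\bm{\theta}^*, \partial\ln p/\partial\bm{\theta}^*) = \bm{J}$.
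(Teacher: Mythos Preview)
Your approach is correct and essentially identical to the paper's: define the residual $\bm{z} = \tilde{\bm{y}} - \boldsymbol{\Sigma}_{21}\boldsymbol{\Sigma}_{11}^{\dagger}\tilde{\bm{x}}$, expand $\mathsf{cov}(\bm{z},\bm{z})$, and invoke the Penrose identity $\bm{A}^{\dagger}\bm{A}\bm{A}^{\dagger}=\bm{A}^{\dagger}$ together with Hermiticity of $\boldsymbol{\Sigma}_{11}^{\dagger}$ to collapse the expansion to $\boldsymbol{\Sigma}_{22}-\boldsymbol{\Sigma}_{21}\boldsymbol{\Sigma}_{11}^{\dagger}\boldsymbol{\Sigma}_{12}\geq \bm{0}$. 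Your additional verification that $\boldsymbol{\Sigma}_{21}\boldsymbol{\Sigma}_{11}^{\dagger}\boldsymbol{\Sigma}_{11}=\boldsymbol{\Sigma}_{21}$ is true but not actually needed for the computation (the paper's expansion never produces such a term), so you may omit it; your treatment of the equality case is also what the paper uses downstream.
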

\begin{proof}
The proof presented here is from \cite{Rao2000}\footnote{The theorem
in \cite{Rao2000} is stronger than the version presented here since
it allows the Moore-Penrose generalized inverse to be substituted by
generalized inverse \cite{Rao1972}.}. Define
\begin{equation}
\begin{array}{ll}
\boldsymbol{\Sigma}_{11} \triangleq \mathsf{cov}(\boldsymbol{x}, \bm{x}), & \boldsymbol{\Sigma}_{12} \triangleq \mathsf{cov}(\boldsymbol{x}, \boldsymbol{y}), \\
\boldsymbol{\Sigma}_{21} \triangleq \mathsf{cov}(\boldsymbol{y},
\boldsymbol{x}), & \boldsymbol{\Sigma}_{22} \triangleq
\mathsf{cov}(\boldsymbol{y}, \bm{y}).
\end{array}
\end{equation}
Define a random vector $\boldsymbol{z} \triangleq \boldsymbol{y} -
\boldsymbol{\Sigma}_{21}\boldsymbol{\Sigma}_{11}^{\dagger}\boldsymbol{x}$,
then we have
\begin{align}
\mathsf{cov}(\boldsymbol{z}, \bm{z}) &= \boldsymbol{\Sigma}_{22} - \boldsymbol{\Sigma}_{21} \left( \boldsymbol{\Sigma}_{11}^{\dagger} \right) ^H \boldsymbol{\Sigma}_{21} ^H - \boldsymbol{\Sigma}_{21}\boldsymbol{\Sigma}_{11}^{\dagger}\boldsymbol{\Sigma}_{12} \\
&\quad + \boldsymbol{\Sigma}_{21} \boldsymbol{\Sigma}_{11}^{\dagger} \boldsymbol{\Sigma}_{11} \left( \boldsymbol{\Sigma}_{11}^{\dagger} \right) ^H \boldsymbol{\Sigma}_{21}^H \notag \\
&= \boldsymbol{\Sigma}_{22} -
\boldsymbol{\Sigma}_{21}\boldsymbol{\Sigma}_{11}^{\dagger}\boldsymbol{\Sigma}_{12}.
\end{align}
The second equality follows by $\left( \boldsymbol{A}^{\dagger}
\right)^H = \left( \boldsymbol{A}^H \right)^\dagger$ and
$\boldsymbol{A}^\dagger \boldsymbol{A} \boldsymbol{A}^\dagger =
\boldsymbol{A}^\dagger$ for any matrix $\boldsymbol{A}$.

The lemma follows since any covariance matrix is nonnegative
definite.
\end{proof}

Now we are ready to prove Theorem \ref{thm_complex_crlb}, a CRB for
unconstrained
complex parameters.\\

\noindent Proof of Theorem \ref{thm_complex_crlb}:
\begin{proof}
Since $\boldsymbol{t}$ is an unbiased estimator,
\begin{equation}
\mathsf{E}\left[ \bm{t}(\bm{y}) \right] = \boldsymbol{\theta}.
\end{equation}
Differentiate both sides with respect to $\bm{\theta}^T$, we can get
\begin{equation}
\int \bm{t}( \bm{y} ) \frac{\partial p}{\partial \bm{\theta}^T} \
d\bm{y} = \bm{I}_n;
\end{equation}
equivalently,
\begin{equation}
\int \bm{t}( \bm{y} ) \frac{\partial \ln p}{\partial \bm{\theta}^T}
p \ d\bm{y} = \bm{I}_n. \label{eq_E_T_U=I}
\end{equation}
With the regularity condition we can rewrite (\ref{eq_E_T_U=I}) as
\begin{equation}
\mathsf{cov}\left( \boldsymbol{t}, \frac{\partial \ln p}{\partial
\boldsymbol{\theta}^*} \right) = \boldsymbol{I}_n. \label{cov_T_U=I}
\end{equation}

Now, substitute $\boldsymbol{x}$ and $\boldsymbol{y}$ in
(\ref{var_cov_ineq}) by $(\partial \ln p) / (\partial
\boldsymbol{\theta}^*)$ and $\boldsymbol{t}$, respectively, we have
\begin{align}
&\quad \mathsf{cov}\left( \boldsymbol{t}, \bm{t} \right) \notag \\
&\geq \mathsf{cov}\left( \boldsymbol{t}, \frac{\partial \ln p}{\partial \boldsymbol{\theta}^*} \right) \mathsf{cov}\left( \frac{\partial \ln p}{\partial \boldsymbol{\theta}^*}, \frac{\partial \ln p}{\partial \boldsymbol{\theta}^*} \right)^{\dagger} \mathsf{cov}\left( \boldsymbol{t}, \frac{\partial \ln p}{\partial \boldsymbol{\theta}^*} \right)^H \notag \\
&=  \mathsf{cov}\left( \frac{\partial \ln p}{\partial \boldsymbol{\theta}^*}, \frac{\partial \ln p}{\partial \boldsymbol{\theta}^*} \right)^{\dagger} \notag \\
&= \mathsf{E} \left[ \left( \frac{\partial \ln p}{\partial
\boldsymbol{\theta}^*}\right) \left( \frac{\partial \ln p}{\partial
\boldsymbol{\theta}^*}\right)^H \right]^{\dagger} = \bm{J}^\dagger.
\end{align}
The first equality follows by (\ref{cov_T_U=I}), and the second
equality follows by the regularity condition.

As for the necessary and sufficient condition of the equality, from
the proof for Lemma \ref{thm_var_cov_ineq}, we have $\mathsf{cov}(
\bm{z}, \bm{z} ) = \bm{0}$ if and only if $\bm{z} = \bm{c}$ with
probability $1$ for some constant vector $\bm{c}$, that is,
\begin{equation}
\bm{t} - \bm{J}^\dagger \frac{\partial \ln p}{\partial
\boldsymbol{\theta}^*} = \bm{c}.
\end{equation}
Taking expectation on the both sides of the equality, by the
regularity condition, we have
\begin{equation}
\mathsf{E}\left[ \bm{t} \right] = \bm{c}.
\end{equation}
Since $\bm{t}$ is unbiased, $\bm{c} = \bm{\theta}$.\end{proof}

\begin{figure*}[!t]
\normalsize
\setcounter{MYtempeqncnt}{\value{equation}}
\setcounter{equation}{\value{magicEquations}}
\begin{align}
\bm{D} &= \frac{1}{\sigma^2} \left\{ \bm{I}_{L+1} \otimes \bm{x}_N^H
\right\} \left[\begin{array}{ccc}
\mathsf{vec}(\tilde{\mathcal{U}}_0) & \ldots &
\mathsf{vec}(\tilde{\mathcal{U}}_{(N-1)L-1})\end{array}\right]
 \left[\begin{array}{ccc} \mathsf{vec}(\tilde{\mathcal{U}}_0)
& \ldots &
\mathsf{vec}(\tilde{\mathcal{U}}_{(N-1)L-1})\end{array}\right]^H
\left\{ \bm{I}_{L+1} \otimes \bm{x}_N \right\}.
\label{eq_simp_D_first}
\end{align}
\hrulefill \vspace*{4pt}
\end{figure*}

\begin{figure*}[!t]
\normalsize
\begin{align}
\bm{D} &= \frac{1}{\sigma^2} \left[\begin{array}{ccc} \mathsf{vec}
\left( \bm{x}_N^H \tilde{\mathcal{U}}_0 \right) & \ldots &
\mathsf{vec}\left( \bm{x}_N^H \tilde{\mathcal{U}}_{(N-1)L-1} \right)
\end{array}\right]  \left[\begin{array}{ccc}
\mathsf{vec}\left( \bm{x}_N^H \tilde{\mathcal{U}}_0 \right) & \ldots
& \mathsf{vec}\left( \bm{x}_N^H \tilde{\mathcal{U}}_{(N-1)L-1}
\right)
\end{array}\right]^H. \label{eq_D_as_vec}
\end{align}
\setcounter{equation}{\value{MYtempeqncnt}} \hrulefill \vspace*{4pt}
\end{figure*}
\section{Proof of Theorem \ref{them_complex_constrained_crb}}

To prove Theorem \ref{them_complex_constrained_crb}, we first derive
some lemmas for holomorphic maps. First of all, the lemma of
implicit functions is presented below.

\begin{lem} \label{lem:implicit_functions}
Let $B\subset \mathbb{C}^n \times \mathbb{C}^m$ be an open set,
$\bm{f}: B\to \mathbb{C}^m$ a holomorphic map, $( \bm{z}_0, \bm{w}_0
) \in B$ a point with $\bm{f}( \bm{z}_0, \bm{w}_0 ) = \bm{0}$, and
\begin{equation}
\mathsf{det}\left( \frac{\partial \bm{f}}{\partial [w_{1}, \ldots,
w_{m}]}(\bm{z}_0, \bm{w}_0) \right) \neq 0.
\end{equation}
Then there is an open neighborhood $U = U' \times U'' \subset B$ and
a holomorphic map $\bm{g}: U' \to U''$ such that
\begin{equation}
\{ ( \bm{z}, \bm{w} ) \in U'\times U'' : \bm{f}( \bm{z},
\bm{w} ) = \bm{0} \} = \{ ( \bm{z}, \bm{g}( \bm{z} ) ): \bm{z} \in
U' \}.
\end{equation}
\end{lem}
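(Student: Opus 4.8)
The plan is to obtain this statement as a holomorphic refinement of the classical (real) implicit function theorem: first produce a merely $C^{1}$ solution map via the real theorem, then show it is in fact holomorphic by a short Wirtinger-calculus computation exploiting that $\bm{f}$ itself is holomorphic.

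First I would identify $\mathbb{C}^{n}\times\mathbb{C}^{m}$ with $\mathbb{R}^{2n}\times\mathbb{R}^{2m}$ and regard $\bm{f}$ as a real-differentiable ($C^{1}$) map of $2n+2m$ real variables into $\mathbb{R}^{2m}$, which is legitimate by the standing real-differentiability assumption. Write $\bm{A}\triangleq\partial\bm{f}/\partial[w_{1},\dots,w_{m}]$. Because $\bm{f}$ is holomorphic, the Cauchy--Riemann relations hold in the $\bm{w}$-variables, so the $2m\times 2m$ real Jacobian of $\bm{f}$ with respect to the real and imaginary parts of $\bm{w}$ is, up to a fixed permutation of coordinates, the block matrix with diagonal blocks $\operatorname{Re}\bm{A}$ and off-diagonal blocks $\pm\operatorname{Im}\bm{A}$, whose determinant equals $|\mathsf{det}\,\bm{A}|^{2}$. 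Hence the hypothesis $\mathsf{det}\,\bm{A}(\bm{z}_{0},\bm{w}_{0})\neq 0$ is exactly the nondegeneracy condition needed to invoke the real implicit function theorem, which yields open neighborhoods $U'\ni\bm{z}_{0}$ and $U''\ni\bm{w}_{0}$ with $U'\times U''\subset B$ and a $C^{1}$ map $\bm{g}\colon U'\to U''$ such that, for $(\bm{z},\bm{w})\in U'\times U''$, one has $\bm{f}(\bm{z},\bm{w})=\bm{0}$ if and only if $\bm{w}=\bm{g}(\bm{z})$. By continuity of $(\bm{z},\bm{w})\mapsto\mathsf{det}\,\bm{A}$ I would shrink $U'$ and $U''$ so that $\bm{A}$ stays invertible throughout $U'\times U''$.

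It then remains to check $\partial\bm{g}/\partial\bm{z}^{H}=\bm{0}$ on $U'$, which is the defining property of a holomorphic map. To do this I would differentiate the identity $\bm{f}(\bm{z},\bm{g}(\bm{z}))\equiv\bm{0}$ with respect to $\bar z_{k}$, $k=1,\dots,n$, applying the Wirtinger chain rule to the composition $\bm{z}\mapsto(\bm{z},\bm{g}(\bm{z}))\mapsto\bm{f}$. Using $\partial z_{j}/\partial\bar z_{k}=0$, $\partial\bar z_{j}/\partial\bar z_{k}=\delta_{jk}$, and $\partial\bar g_{l}/\partial\bar z_{k}=\overline{\partial g_{l}/\partial z_{k}}$, this gives, with all derivatives of $\bm{f}$ evaluated at $(\bm{z},\bm{g}(\bm{z}))$,
\[
\frac{\partial\bm{f}}{\partial\bar z_{k}}
+\sum_{l=1}^{m}\frac{\partial\bm{f}}{\partial w_{l}}\,\frac{\partial g_{l}}{\partial\bar z_{k}}
+\sum_{l=1}^{m}\frac{\partial\bm{f}}{\partial\bar w_{l}}\,\overline{\frac{\partial g_{l}}{\partial z_{k}}}=\bm{0}.
\]
Holomorphy of $\bm{f}$ forces the first term and every $\partial\bm{f}/\partial\bar w_{l}$ to vanish, leaving $\bm{A}\,(\partial\bm{g}/\partial\bar z_{k})=\bm{0}$ with $\partial\bm{g}/\partial\bar z_{k}\triangleq[\partial g_{1}/\partial\bar z_{k},\dots,\partial g_{m}/\partial\bar z_{k}]^{T}$; since $\bm{A}$ is invertible on $U'\times U''$ we conclude $\partial\bm{g}/\partial\bar z_{k}=\bm{0}$ for every $k$, i.e.\ $\partial\bm{g}/\partial\bm{z}^{H}=\bm{0}$, so $\bm{g}$ is holomorphic. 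The main obstacle is the bookkeeping in this last step: one must set up the Wirtinger chain rule for a composite map correctly and make sure that \emph{both} families of antiholomorphic derivatives of $\bm{f}$ (those in $\bar z_{k}$ and those in $\bar w_{l}$) drop out; the reduction in the first step is routine once the Cauchy--Riemann block-determinant identity $\det_{\mathbb{R}}=|\mathsf{det}\,\bm{A}|^{2}$ is recorded. Alternatively, the statement is a standard fact in several complex variables and could simply be cited from \cite{Fritzsche2002}.
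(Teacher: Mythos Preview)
Your argument is correct: the reduction to the real implicit function theorem via the block Cauchy--Riemann identity $\det_{\mathbb{R}}=|\mathsf{det}\,\bm{A}|^{2}$, followed by the Wirtinger chain-rule computation to kill $\partial\bm{g}/\partial\bm{z}^{H}$, is a valid and standard way to obtain the holomorphic implicit function theorem.

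The paper, however, does not prove this lemma at all; its ``proof'' is literally the single line ``See \cite{Fritzsche2002}.'' So your proposal goes well beyond what the paper does. Your closing remark that one could simply cite \cite{Fritzsche2002} is in fact exactly what the paper chose to do. What your self-contained route buys is independence from the external reference and a demonstration that the result follows with only the Wirtinger-calculus machinery already in use elsewhere in the paper; what the citation buys is brevity, since the lemma is a textbook fact in several complex variables and is not the paper's contribution.
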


\begin{proof}
See \cite{Fritzsche2002}.
\end{proof}

Now consider the case where the possible values of $\bm{\theta}$ is
constrained to a set $\Theta \triangleq \left\{
\bm{\theta}\in\mathbb{C}^n: \bm{f}(\bm{\theta}) = \bm{0} \right\}$
defined by a holomorphic constraint function $\bm{f}: \mathbb{C}^n
\to \mathbb{C}^m$, $m\leq n$. Available observation is $\bm{y} \in
\mathbb{C}^p$ with pdf $p( \bm{y}; \bm{\theta} )$. The goal here is
to derive the CRB for any unbiased estimator $\bm{t}( \bm{y} )$ of
$\bm{\theta}$. Note now the unbiasedness refers to
\begin{equation}
\mathsf{E}\left[ \bm{t}( \bm{y} ) \right] = \bm{\theta}
\end{equation}
for all $\bm{\theta} \in \Theta$ instead of the whole
$\mathbb{C}^n$. Assume $\partial \bm{f} / \partial \bm{\theta}^T$
has full rank for all $\bm{\theta} \in \Theta$ so Lemma
\ref{lem:implicit_functions} applies. Then, we have the following
lemma.

\begin{lem} \label{thm_expectation_delta_equal_delta}
Let $\bm{t}( \bm{y} )$ be an unbiased estimator of $\bm{\theta}$ for
all $\bm{\theta} \in \Theta$. Then
\begin{equation}
\mathsf{E}\left[ \left( \bm{t} - \bm{\theta} \right) \frac{ \partial
\ln p( \bm{y}; \bm{\theta} )}{\partial \bm{\theta}^T} \right]
\bm{\delta} = \bm{\delta}
\end{equation}
for all $\bm{\theta} \in \Theta$, and $\bm{\delta}$ such that
\begin{equation}
\frac{\partial \bm{f}}{\partial \bm{\theta}^T} \bm{\delta} = \bm{0}.
\label{eq_property_delta}
\end{equation}
\end{lem}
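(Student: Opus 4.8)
The plan is to follow the classical real-parameter argument of Stoica–Ng, but with the smooth manifold chart replaced by the \emph{holomorphic} chart supplied by Lemma~\ref{lem:implicit_functions}, and to use holomorphy to force the conjugate-variable terms in Wirtinger's chain rule to vanish.

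First I would fix an arbitrary $\bm{\theta}_0 \in \Theta$. Since $\partial\bm{f}/\partial\bm{\theta}^T$ has full row rank $m$ at $\bm{\theta}_0$, after permuting coordinates we may assume its last $m \times m$ block is nonsingular; write $\bm{\theta} = [\bm{z}^T \; \bm{w}^T]^T$ with $\bm{z} \in \mathbb{C}^{n-m}$, $\bm{w} \in \mathbb{C}^m$. Lemma~\ref{lem:implicit_functions} then gives a neighborhood $U = U' \times U''$ of $\bm{\theta}_0$ and a holomorphic $\bm{g}: U' \to U''$ with $\Theta \cap U = \{(\bm{z},\bm{g}(\bm{z})) : \bm{z} \in U'\}$. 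Set $\bm{\phi}(\bm{z}) \triangleq [\bm{z}^T \; \bm{g}(\bm{z})^T]^T$: this is a holomorphic map $U' \to \Theta$, so $\partial\bm{\phi}/\partial\bm{z}^H = \bm{0}$, and its Jacobian $\partial\bm{\phi}/\partial\bm{z}^T$ has rank $n-m$ because its top block is $\bm{I}_{n-m}$.

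Next, unbiasedness on $\Theta$ reads $\int \bm{t}(\bm{y})\, p(\bm{y};\bm{\phi}(\bm{z}))\, d\bm{y} = \bm{\phi}(\bm{z})$ for all $\bm{z} \in U'$. I would differentiate both sides with respect to $\bm{z}^T$. By Wirtinger's chain rule, $\frac{\partial}{\partial\bm{z}^T}\, p(\bm{y};\bm{\phi}(\bm{z})) = \frac{\partial p}{\partial\bm{\theta}^T}\frac{\partial\bm{\phi}}{\partial\bm{z}^T} + \frac{\partial p}{\partial\bm{\theta}^H}\frac{\partial\bm{\phi}^*}{\partial\bm{z}^T}$; the second term is zero since $\frac{\partial\bm{\phi}^*}{\partial\bm{z}^T} = \left(\frac{\partial\bm{\phi}}{\partial\bm{z}^H}\right)^* = \bm{0}$ by holomorphy of $\bm{\phi}$. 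Writing $\partial p/\partial\bm{\theta}^T = p\,\partial\ln p/\partial\bm{\theta}^T$ and using the regularity condition (which makes $\mathsf{E}[\bm{\theta}\,\partial\ln p/\partial\bm{\theta}^T] = \bm{0}$, $\bm{\theta}$ being deterministic), differentiating under the integral yields $\mathsf{E}\!\left[(\bm{t}-\bm{\theta})\frac{\partial\ln p}{\partial\bm{\theta}^T}\right]\frac{\partial\bm{\phi}}{\partial\bm{z}^T} = \frac{\partial\bm{\phi}}{\partial\bm{z}^T}$, all evaluated at $\bm{\theta} = \bm{\phi}(\bm{z})$.

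Finally I would show that the columns of $\partial\bm{\phi}/\partial\bm{z}^T$ span exactly $\{\bm{\delta} : (\partial\bm{f}/\partial\bm{\theta}^T)\bm{\delta} = \bm{0}\}$: differentiating the identity $\bm{f}(\bm{\phi}(\bm{z})) = \bm{0}$ with respect to $\bm{z}^T$ (again the conjugate term drops by holomorphy) gives $\frac{\partial\bm{f}}{\partial\bm{\theta}^T}\frac{\partial\bm{\phi}}{\partial\bm{z}^T} = \bm{0}$, and since $\mathrm{rank}(\partial\bm{\phi}/\partial\bm{z}^T) = n-m = \dim\ker(\partial\bm{f}/\partial\bm{\theta}^T)$, the inclusion of column span into kernel is an equality. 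Hence any admissible $\bm{\delta}$ can be written $\bm{\delta} = (\partial\bm{\phi}/\partial\bm{z}^T)\bm{c}$, and right-multiplying the displayed identity by $\bm{c}$ gives $\mathsf{E}[(\bm{t}-\bm{\theta})\,\partial\ln p/\partial\bm{\theta}^T]\,\bm{\delta} = \bm{\delta}$. As $\bm{\theta}_0 \in \Theta$ was arbitrary, this holds throughout $\Theta$. The main obstacle is bookkeeping the Wirtinger calculus — in particular verifying that the conjugate-variable terms genuinely vanish, which is exactly why the \emph{holomorphic} implicit function theorem is the right tool, and justifying differentiation under the integral from the stated regularity; the rest is routine linear algebra.
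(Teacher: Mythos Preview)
Your proposal is correct and is precisely the argument the paper intends: the paper's proof simply instructs the reader to follow \cite[Theorem~I]{Marzetta1993}, noting that the implicit function theorem and chain rule carry over for holomorphic constraint functions, and your write-up is a faithful, detailed execution of exactly that strategy (local holomorphic chart via Lemma~\ref{lem:implicit_functions}, Wirtinger chain rule with the conjugate terms killed by holomorphy, then identification of the column span of $\partial\bm{\phi}/\partial\bm{z}^T$ with $\ker(\partial\bm{f}/\partial\bm{\theta}^T)$).
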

\begin{proof}
Follow the derivation of \cite[Theorem I]{Marzetta1993}. Note that
the implicit function theorem and chain rule hold for holomorphic
constraint functions \cite{Fritzsche2002}.
\end{proof}

And a corollary immediately follows.

\begin{cor} \label{cor_expectation_U_equal_U}
Choose a matrix $\bm{U}$ with orthonormal columns such that
\begin{equation}
\frac{\partial \bm{f}}{\partial \bm{\theta}^T} \bm{U} = \bm{0}.
\end{equation}
Then
\begin{equation}
\mathsf{E}\left[ \left( \bm{t} - \bm{\theta} \right) \frac{ \partial
\ln p( \bm{y}; \bm{\theta} )}{\partial \bm{\theta}^T} \right] \bm{U}
= \bm{U}.
\end{equation}
\end{cor}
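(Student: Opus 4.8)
The plan is to obtain the corollary as an immediate, purely mechanical consequence of Lemma \ref{thm_expectation_delta_equal_delta}, applied one column of $\bm{U}$ at a time. First I would write $\bm{U} = [\begin{array}{ccc} \bm{u}_1 & \cdots & \bm{u}_{n-m} \end{array}]$, where each $\bm{u}_i \in \mathbb{C}^n$ is a column of $\bm{U}$. Since $\frac{\partial \bm{f}}{\partial \bm{\theta}^T}\bm{U} = \bm{0}$, every column satisfies $\frac{\partial \bm{f}}{\partial \bm{\theta}^T}\bm{u}_i = \bm{0}$, i.e., each $\bm{u}_i$ is an admissible choice of $\bm{\delta}$ in the sense of (\ref{eq_property_delta}).

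Next I would invoke Lemma \ref{thm_expectation_delta_equal_delta} with $\bm{\delta} = \bm{u}_i$, for $i = 1, \ldots, n-m$, which gives $\mathsf{E}\left[ \left( \bm{t} - \bm{\theta} \right) \frac{ \partial \ln p}{\partial \bm{\theta}^T} \right]\bm{u}_i = \bm{u}_i$ for all $\bm{\theta} \in \Theta$. Stacking these $n-m$ vector identities side by side as the columns of a single matrix identity yields exactly $\mathsf{E}\left[ \left( \bm{t} - \bm{\theta} \right) \frac{ \partial \ln p}{\partial \bm{\theta}^T} \right]\bm{U} = \bm{U}$, which is the assertion. No further manipulation is needed.

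There is essentially no obstacle here; the only thing to verify is that the hypotheses of Lemma \ref{thm_expectation_delta_equal_delta} are in force, namely that $\bm{t}$ is unbiased for all $\bm{\theta}\in\Theta$ and that $\partial\bm{f}/\partial\bm{\theta}^T$ has full rank (so that Lemma \ref{lem:implicit_functions} can be applied), both of which are standing assumptions. I would also remark in passing that orthonormality of the columns of $\bm{U}$ plays no role in this step: it is only relevant later, when one forms $\bm{U}(\bm{U}^H\bm{J}\bm{U})^\dagger\bm{U}^H$ and checks the covariance bound in Theorem \ref{them_complex_constrained_crb}. Thus the corollary is best understood as a compact matrix restatement of Lemma \ref{thm_expectation_delta_equal_delta} over a complete set of directions annihilating $\partial\bm{f}/\partial\bm{\theta}^T$.
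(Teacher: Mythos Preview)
Your proposal is correct and matches the paper's own proof, which simply notes that every column of $\bm{U}$ satisfies (\ref{eq_property_delta}) and therefore Lemma \ref{thm_expectation_delta_equal_delta} applies columnwise. Your additional observation that orthonormality is irrelevant at this step is accurate and a nice clarification, though the paper does not make it explicit.
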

\begin{proof}
Note that every column of $\bm{U}$ satisfies
(\ref{eq_property_delta}).
\end{proof}
\vspace{0.3cm} Now we are ready to prove Theorem
\ref{them_complex_constrained_crb}.
\begin{proof}
Substitute $\boldsymbol{x}$ and $\boldsymbol{y}$ in
(\ref{var_cov_ineq}) by $\bm{U}^H (\partial \ln p) / (\partial
\boldsymbol{\theta}^*)$ and $\boldsymbol{t}$, respectively. Then
apply Corollary \ref{cor_expectation_U_equal_U} to simplify the
equation. As in the proof for Theorem \ref{thm_complex_crlb}, the
equality holds if and only if
\begin{equation}
\bm{t} - \bm{U} \left( \bm{U}^H \bm{J} \bm{U} \right)^\dagger
\bm{U}^H \frac{\partial \ln p}{\partial \bm{\theta}^*} = \bm{c}
\quad \mbox{with probability }1\label{eq_equality_hold_constrained}
\end{equation}
for some constant vector $\bm{c}$. Taking expectations on both sides
of (\ref{eq_equality_hold_constrained}) and we have $\bm{c} =
\bm{\theta}$.
\end{proof}

\section{Proof of CRB} \label{appendix_proof_of_crb}
In this appendix, we show how to derive the CRB from (\ref{eq:crlb})
to the final form (\ref{eq_crb_blockTrans}). Define $\bm{A}
\triangleq \bm{J}_{0,1} \bm{J}_{1,1}^{-1} \bm{J}_{0,1}^H$. Then,
using equations (\ref{j_01}) and (\ref{j_11}), the $(i,j)$th element
of $\bm{A}$ can be calculated as
\begin{align}
[\bm{A}]_{i,j} = &\frac{1}{\sigma^2} \bm{s}_N^H
\bm{K}_i^H\bm{K}\bm{K}^\dagger\bm{K}_j \bm{s}_N. \label{A_ij}
\end{align}
Define $\bm{D} \triangleq \bm{J}_{0,0} - \bm{J}_{0,1}
\bm{J}_{1,1}^{-1} \bm{J}_{0,1}^H$. Then we have, by (\ref{j_00}) and
(\ref{A_ij}),
\begin{align}\label{eq:D_ij}
[\bm{D}]_{i,j} = &{1\over\sigma^2}\bm{s}_N^H \bm{K}_i^H \left(
\bm{I}_{NP-L} - \bm{K}\bm{K}^\dagger \right) \bm{K}_j \bm{s}_N.
\end{align}

By assumption, $\tilde{\bm{F}}$ is a full-rank matrix, and thus
$\bm{F}$ is a full-column-rank matrix. The singular value
decomposition (SVD) of the matrix $\bm{K} = \bm{G}\bm{H}(\bm{I}_N
\otimes \bm{F})$, therefore, is of the form
\begin{align}
\quad \bm{K} = \bm{G} \bm{H} (\bm{I}_N \otimes \bm{F}) =
\left[\begin{array}{cc} \bar{\bm{U}} &
\tilde{\bm{U}}\end{array}\right] \left[\begin{array}{c} \bm{\Sigma}
\\ \bm{0} \end{array}\right] \bm{V}^H,
\end{align}
in which the matrix $\tilde{\bm{U}}$ represents the null space and
has a size of $(NP-L)\times (NL-L)$. Now, we have \begin{equation}
\bm{I}_{NP-L} - \bm{K}\bm{K}^\dagger = \tilde{\bm{U}}
\tilde{\bm{U}}^H,
\end{equation}
and we can rewrite $[\bm{D}]_{i,j}$ in (\ref{eq:D_ij}) as
\begin{equation}
[\bm{D}]_{i,j} = \frac{1}{\sigma_2}
\bm{x}_N^H\bm{J}_i^H\bm{G}^H\tilde{\bm{U}}(\bm{G}\tilde{\bm{U}})^H\bm{J}_j\bm{x}_N
\end{equation}
where \begin{equation}\label{eq:xN}\bm{x}_N \triangleq  \bm{s}_N^H
(\bm{I}_N \otimes \bm{F})^H
\end{equation} is the vector containing precoded transmitted
symbols. Now, we can express $\bm{D}$ as
\begin{align}
\bm{D} =& \frac{1}{\sigma^2} \left( \bm{I}_{L+1} \otimes \bm{x}_N^H
 \right) \left[\begin{array}{c}
\bm{J}_0^H \bm{G}^H \tilde{\bm{U}} \\
\vdots \\
\bm{J}_L^H \bm{G}^H \tilde{\bm{U}}
\end{array}\right] \left[\begin{array}{c}
\bm{J}_0^H \bm{G}^H \tilde{\bm{U}} \\
\vdots \\ \bm{J}_L^H \bm{G}^H \tilde{\bm{U}}
\end{array}\right]^H\\\notag& \cdot\left( \bm{I}_{L+1} \otimes \bm{x}_N\right).
\end{align}
Define the Hankel matrix
\begin{equation}
\tilde{\mathcal{U}}_j \triangleq \left[ \begin{array}{cccc}
u_{0,j} & u_{1,j} & \ldots & u_{L,j} \\
u_{1,j} & u_{2,j} & \ldots & u_{L+1,j} \\
\vdots & \vdots & \ddots & \vdots \\
u_{PN-1,j} & u_{PN, j} & \ldots & u_{PN+L-1,j}
\end{array} \right],
\end{equation}
where $u_{i,j}$ denotes the $(i,j)$th element of the matrix
$\bm{G}^H\tilde{\bm{U}}$. Notice that $u_{i,j}=0$ for $0\leq i\leq
L-1$ and $PN\leq i\leq PN+L-1$ due to the structure of matrix
$\bm{G}$. The matrix $\bm{D}$ can then be written as in
(\ref{eq_simp_D_first}). Noting that $\left( \bm{B}^T \otimes \bm{A}
\right) \mathsf{vec}\left( \bm{X} \right) = \mathsf{vec}\left(
\bm{A}\bm{X}\bm{B} \right)$ for any three matrices $\bm{A}$,
$\bm{B}$, and $\bm{X}$ \cite{Horn1985}, we have (\ref{eq_D_as_vec}).
Since $\bm{x}_N^H \tilde{\mathcal{U}}_i$ is only a row vector for
all $i$, the above equation is equivalent to
\addtocounter{equation}{2}
\begin{align}
\bm{D} &= \frac{1}{\sigma^2} \left[\begin{array}{ccc} \tilde{\mathcal{U}}_0^T
  \bm{x}_N^* & \ldots & \tilde{\mathcal{U}}_{(N-1)L-1}^T  \bm{x}_N^* \end{array}\right] \notag \\
&\quad \left[\begin{array}{ccc} \tilde{\mathcal{U}}_0^T  \bm{x}_N^*
&
\ldots & \tilde{\mathcal{U}}_{(N-1)L-1}^T  \bm{x}_N^* \bm{s}_N^* \end{array}\right]^H \notag \\
&= \frac{1}{\sigma^2} \tilde{\mathcal{U}} \left( \bm{I}_{(N-1)L}
\otimes \bm{x}_N^* \right) \left( \bm{I}_{(N-1)L} \otimes \bm{x}_N^*
\right) \tilde{\mathcal{U}}^H,
\end{align}
where
\begin{equation}
\tilde{\mathcal{U}} \triangleq \left[\begin{array}{cccc}
\tilde{\mathcal{U}}_0^T &  \tilde{\mathcal{U}}_1^T & \ldots &
\tilde{\mathcal{U}}_{(N-1)L-1}^T
\end{array}\right].\label{eq:Utilde}
\end{equation}
Using the fact that $\left( \bm{A} \otimes \bm{B} \right)^T =
\bm{A}^T \otimes \bm{B}^T$ \cite{Horn1985}, we can further simplify
the expression for $\bm{D}$ as
\begin{align}
\bm{D} &= \frac{1}{\sigma^2} \tilde{\mathcal{U}} \left[
\bm{I}_{(N-1)L} \otimes \left( \bm{x}_N^* \bm{x}_N^T  \right)
\right] \tilde{\mathcal{U}}^H. \label{eq:D}
\end{align}

Finally, substituting (\ref{eq:D}) into (\ref{eq:crlb}) and using
(\ref{eq:xN}), we get the final form of the CRB for blind channel
estimation in redundant block transmission systems, as shown in
(\ref{eq_crb_blockTrans}).

\bibliographystyle{IEEEtran}
\bibliography{kkk}

\begin{thebibliography}{10}
\providecommand{\url}[1]{#1}
\csname url@samestyle\endcsname
\providecommand{\newblock}{\relax}
\providecommand{\bibinfo}[2]{#2}
\providecommand{\BIBentrySTDinterwordspacing}{\spaceskip=0pt\relax}
\providecommand{\BIBentryALTinterwordstretchfactor}{4}
\providecommand{\BIBentryALTinterwordspacing}{\spaceskip=\fontdimen2\font plus
\BIBentryALTinterwordstretchfactor\fontdimen3\font minus
  \fontdimen4\font\relax}
\providecommand{\BIBforeignlanguage}[2]{{%
\expandafter\ifx\csname l@#1\endcsname\relax
\typeout{** WARNING: IEEEtran.bst: No hyphenation pattern has been}%
\typeout{** loaded for the language `#1'. Using the pattern for}%
\typeout{** the default language instead.}%
\else
\language=\csname l@#1\endcsname
\fi
#2}}
\providecommand{\BIBdecl}{\relax}
\BIBdecl

\bibitem{Barbarossa2002}
S.~Barbarossa, A.~Scaglione, and G.~B. Giannakis, ``Performance analysis of a
  deterministic channel estimator for block transmission systems with null
  guard intervals,'' \emph{{IEEE} Trans. Signal Process.}, vol.~50, pp.
  684--695, Mar. 2002.

\bibitem{Su2008a}
B.~Su and P.~P. Vaidyanathan, ``Comments on ``performance analysis of a
  deterministic channel estimator for block transmission systems with null
  guard intervals'','' \emph{{IEEE} Trans. Signal Process.}, vol.~56, pp.
  1308--1309, Mar. 2008.

\bibitem{Su2007}
------, ``Subspace-based blind channel identification for cyclic prefix systems
  using few received blocks,'' \emph{{IEEE} Trans. Signal Process.}, vol.~55,
  pp. 4979--4993, Oct. 2007.

\bibitem{Su2007a}
------, ``Performance analysis of generalized zero-padded blind channel
  estimation algorithms,'' \emph{{IEEE} Signal Process. Lett.}, vol.~14, pp.
  789--792, Nov. 2007.

\bibitem{802.11n}
\emph{{IEEE} Standard for Information technology---Telecommunications and
  Information Exchange Between Systems---Local and Metropolitan Area
  Networks---Specific Requirements Part 11: Wireless {LAN} Medium Access
  Control ({MAC}) and Physical Layer ({PHY}) Specifications Amendment 5:
  Enhancements for Higher Throughput}, IEEE Std 802.11n-2009, 2009.

\bibitem{IEEE802.16e}
\emph{{IEEE} Standard for Local and Metropolitan Area Networks Part 16: Air
  Interface for Fixed and Mobile Broadband Wireless Access Systems Amendment 2:
  Physical and Medium Access Control Layers for Combined Fixed and Mobile
  Operation in Licensed Bands and Corrigendum 1}, {IEEE} Std 802.16e-2005 and
  {IEEE} Std 802.16-2004/Cor1-2005, 2005.

\bibitem{3GPPLTE}
\emph{Universal Mobile Telecommunications System ({UMTS}); Technical
  Specifications and Technical Reports for a {UTRAN}-based {3GPP} system}, ETSI
  TS 121 101 v.8.0.0, 2009.

\bibitem{Tong2004}
L.~Tong, B.~M. Sadler, and M.~Dong, ``Pilot-assisted wireless transmissions:
  General model, design criteria, and signal processing,'' \emph{IEEE Signal
  Processing Mag.}, vol.~21, pp. 12--25, Nov. 2004.
  
\bibitem{Muquet2002}
B.~Muquet, M. de Courville, and P.~Duhamel, ``Subspace-based blind and 
semi-blind channel estimation for OFDM systems,'' \emph{IEEE Trans. Signal 
Process.}, vol.~50, no. 7, pp. 1699-1712, Jul. 2002.


\bibitem{Zhou2001}
S.~Zhou and G.~B. Giannakis, ``Finite-alphabet based channel estimation for
  {OFDM} and related multicarrier systems,'' \emph{{IEEE} Trans. Commun.},
  vol.~49, pp. 1402--1414, Aug. 2001.

\bibitem{Tong1998}
L.~Tong and S.~Perreau, ``Multichannel blind identification: From subspace to
  maximum likelihood methods,'' \emph{Proc. {IEEE}}, vol.~86, pp. 1951--1968,
  Oct. 1998.

\bibitem{Carvalho1997}
E.~de~Carvalho and D.~T.~M. Slock, ``{C}ramer-{R}ao bounds for semi-blind,
  blind, and training sequence based channel estimation,'' in \emph{1st {IEEE}
  Signal Processing Workshop Signal Processing Advances in Wireless
  Communications}, 1997, pp. 129--132.

\bibitem{Scaglione1999a}
A.~Scaglione, G.~B. Giannakis, and S.~Barbarossa, ``Redundant filterbank
  precoders and equalizers part {I}: Unification and optimal designs,''
  \emph{{IEEE} Trans. Signal Process.}, vol.~47, pp. 1988--2006, Jul. 1999.

\bibitem{Scaglione1999b}
A.~Scaglione, G.~B. Giannakis, and S.~Barbarossa, ``Redundant filterbank
  precoders and equalizers part {II}: Blind channel estimation, synchronization, and direct equalization,''
  \emph{{IEEE} Trans. Signal Process.}, vol.~47, pp. 2007--2022, Jul. 1999.

\bibitem{Gorman1990}
J.~D. Gorman and A.~O. Hero, ``Lower bounds for parametric estimation with
  constraints,'' \emph{{IEEE} Trans. Inf. Theory}, vol.~26, pp. 1285--1301,
  Nov. 1990.

\bibitem{Stoica1998}
P.~Stoica and B.~C. Ng, ``On the {C}ram{\'{e}}r-{R}ao bound under parametric
  constraints,'' \emph{{IEEE} Signal Process. Lett.}, vol.~5, pp. 177--179,
  Jul. 1998.

\bibitem{VanTrees2001}
H.~L. Van~Trees, \emph{Detection, Estimation, and Modulation Theory: Part
  {I}}.\hskip 1em plus 0.5em minus 0.4em\relax New York, NY:
  Wiley-Interscience, 2001.

\bibitem{Bos1994}
A.~van~den Bos, ``A {Cram{\'{e}}r}-{Rao} lower bound for complex parameters,''
  \emph{{IEEE} Trans. Signal Process.}, vol.~42, p. 2859, Oct. 1994.

\bibitem{Jagannatham2004}
A.~K. Jagannatham and B.~D. Rao, ``{C}ramer-{R}ao lower bound for constrained
  complex parameters,'' \emph{{IEEE} Signal Process. Lett.}, vol.~11, pp.
  875--878, Nov. 2004.

\bibitem{Smith2005}
S.~T. Smith, ``Statistical resolution limits and the complex
  {C}ram{\'{e}}r-{R}ao bound,'' \emph{IEEE Trans. Signal Process.}, vol.~53,
  pp. 1597--1609, May 2005.

\bibitem{Ollila2008}
E.~Ollila, V.~Koivunen, and J.~Eriksson, ``On the {C}ram{\'{e}}r-{R}ao bound
  for the constrained and unconstrained complex parameters,'' in \emph{5th
  {IEEE} Sensor Array and Multichannel Signal Processing Workshop}, 2008, pp.
  414--418.

\bibitem{Fritzsche2002}
K.~Fritzsche and H.~Grauert, \emph{From Holomorphic Functions to Complex
  Manifolds}.\hskip 1em plus 0.5em minus 0.4em\relax New York, NY:
  Springer-Verlag, 2002.

\bibitem{Rao2000}
C.~R. Rao, ``Statistical proofs of some matrix inequalities,'' \emph{Linear
  Algebra and Its Applications}, vol. 321, pp. 307--320, Dec. 2000.

\bibitem{Rao1972}
C.~R. Rao and S.~K. Mitra, ``Generalized inverse of a matrix and its
  applications,'' in \emph{Proc. 6th Berkeley Symp. Mathematical Statistics and
  Probability}, 1972, pp. 601--620.

\bibitem{Marzetta1993}
T.~L. Marzetta, ``A simple derivation of the constrained multiple parameter
  {C}ramer-{R}ao bound,'' \emph{{IEEE} Trans. Signal Process.}, vol.~41, pp.
  2247--2249, Jun. 1993.

\bibitem{Horn1985}
R.~A. Horn and C.~R. Johnson, \emph{Matrix Analysis}.\hskip 1em plus 0.5em
  minus 0.4em\relax Cambridge University Press, 1985.

\bibitem{VanTrees2007}
H.~L. Van~Trees and K.~L. Bell, Eds., \emph{Bayesian Bounds for Parameter
  Estimation and Nonlinear Filtering/Tracking}.\hskip 1em plus 0.5em minus
  0.4em\relax Piscataway, NJ: {IEEE} Press, 2007.

\bibitem{Gill1995}
R.~D. Gill and B.~Y. Levit, ``Applications of the van {T}rees inequality: A
  {B}ayesian {C}ram{\'{e}}r-{R}ao bound,'' \emph{Bernoulli}, vol.~1, no. 1/2,
  pp. 59--79, 1995.

\end{thebibliography}

\end{document}